\documentclass[10pt,conference,letterpaper]{IEEEtran}

\usepackage{amsmath,amssymb,amsthm}
\usepackage{booktabs}
\usepackage{graphicx}
\usepackage{xcolor}
\usepackage{url}
\usepackage{cite}
\usepackage{multirow}
\usepackage{array}
\usepackage{algorithm}
\usepackage{algpseudocode}
\usepackage{placeins}
\usepackage{stfloats}

\newcommand{\system}{VeraRAN}
\newcommand{\repair}{VeraSync}
\newcommand{\por}{MI-POR}
\newcommand{\Req}{\mathsf{REQ}}

\newcommand{\Apply}{\mathsf{APPLY}}

\newcommand{\Obs}{\mathsf{OBS}}
\newcommand{\Exec}{\mathsf{Exec}}

\newcommand{\Verify}{\mathsf{Verify}}
\newcommand{\Pred}{\operatorname{Pred}}
\newcolumntype{L}[1]{>{\raggedright\arraybackslash}p{#1}}

\newtheorem{definition}{Definition}
\newtheorem{lemma}{Lemma}
\newtheorem{theorem}{Theorem}

\title{VeraRAN: Pre-Actuation Certification and Event-Causal Synchronization Repair for
Asynchronous Multi-Interface RAN Plans}
\author{
\IEEEauthorblockN{Yinghan Hou and Zongyou Yang}
\IEEEauthorblockA{Imperial College London, United Kingdom
\enspace \{yh24, zy2926\}@ic.ac.uk}
}

\begin{document}
\bstctlcite{BSTcontrol}
\maketitle

\begin{abstract}
Agentic RAN controllers combine mobility, energy, and resource actions across
independently implemented interfaces. Even when each command is valid and the
target state is safe, asynchronous actuation can drive the network through
unsafe intermediate states. In a frozen study of a 35B planner, 28.8\% of
locally valid plans remained asynchronously unsafe. We introduce VeraRAN,
which checks plans before actuation by modeling request, delivery, acceptance,
application, completion, and observation for each action while exploring
plausible delays and event orders. When VeraRAN finds a counterexample,
VeraSync inserts versioned event barriers and rechecks the repaired plan for
safety and completion. MI-POR prunes independent interleavings using RAN
lifecycle and resource footprints. In a post-freeze stratified confirmation
within the declared repair domain, VeraSync re-certified every confirmation
plan while leaving 87\% of action pairs unordered. MI-POR matched exact search
in a property-stratified audit and reduced explored states by 94.6--95.0\% on
20--40-action plans. Native ns-O-RAN replay and an independent live E2 audit
showed why distinguishing these events matters: acceptance may precede the
authoritative state transition, so dependent actions must wait for direct
APPLY evidence or a contract-backed completion event causally downstream of
APPLY.
\end{abstract}

\begin{IEEEkeywords}
O-RAN, agentic networking, model checking, partial-order reduction,
runtime synchronization, RAN control.
\end{IEEEkeywords}

\section{Introduction}

A RAN plan may combine mobility, energy, scheduling, load, and policy
controls~\cite{oh2011dynamic,li2011energyefficient,caballero2017multitenant}.
O-RAN exposes them through separately operated control and management paths
~\cite{garciasaavedra2021oran,polese2023understanding,hoffmann2024xapps,
oran2023e2ap,etsi2024a1,etsi2024o1}. Emerging intent-driven and agentic
controllers motivate coordinated control across RAN functions and timescales
~\cite{bonati2021intelligence,elkael2026agentran}. Yet local validity does not
constrain the order in which such controls take effect. A source cell may
sleep before the serving-cell transition applies; a donor quota may fall
before its replacement becomes active, even though every command and the
final state are valid.

We model E2 control, A1 policy application, and O1 management as
independently contracted actions whose delivery envelopes and authoritative
events may differ. Local gates validate one command, fixed waits approximate
actuation with elapsed time, and global serialization orders related and
unrelated actions alike. These safeguards do not identify the authoritative
event that discharges a cross-path dependency. 

\system{} supplies a pre-actuation certification pipeline
(Fig.~\ref{fig:system-story}). A
\emph{per-action event contract} relates request, delivery, acceptance,
authoritative application, completion, and observation. The verifier explores
admissible endpoints, delays, orders, versions, and evidence scopes.
Counterexamples drive \repair{} to add event dependencies, after which the
entire revised plan undergoes the same safety-and-completion check. A
fail-closed executor releases each gated REQUEST, or opens a registered
application hook, only when authoritative evidence matches the required
action, scope, version, and epoch.

Across a balanced six-stratum study, 51 of 177 locally valid 35B plans admitted
an unsafe asynchronous execution. In an independent 144-plan confirmation
within the declared repair domain, \repair{} re-certified every plan. Initial
portable REQUEST-gate synthesis left 86.98\% of action pairs unordered;
registered application hooks enabled verified pruning to 89.33\%. Live E2
measurements grounded the release events, while the
multi-interface partial-order reduction (\por{}) reduced explored states by
94.6--95.0\% on decomposable 20--40-action plans.

\begin{figure*}[t]
  \centering
  \includegraphics[width=0.99\textwidth]{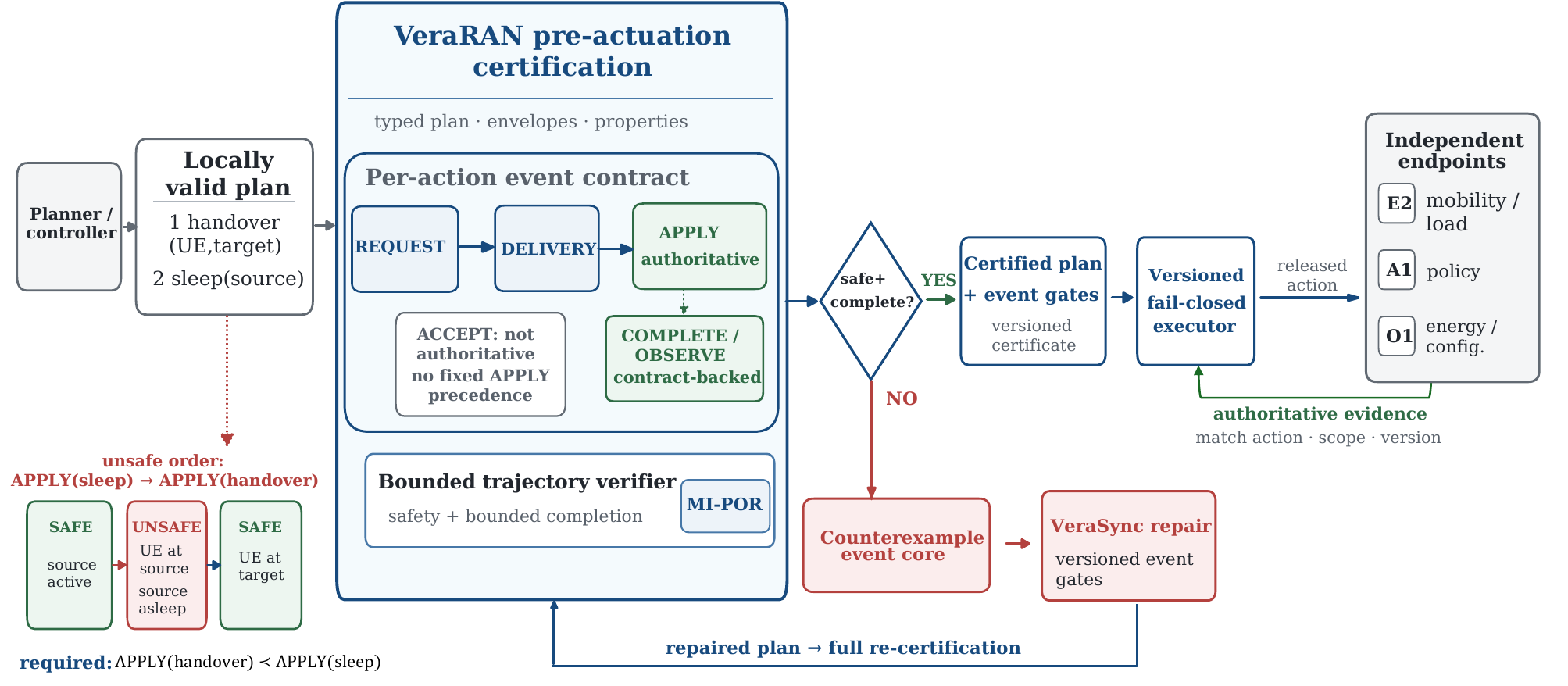}
  \caption{\textbf{\system{} architecture and event-causal execution.}
  The verifier certifies a candidate plan or sends its counterexample to
  \repair{} for repair with registered templates and full re-certification.
  The executor releases a gated REQUEST or opens a registered application
  hook only on authoritative evidence matching the certified action, scope,
  version, and epoch.}
  \label{fig:system-story}
\end{figure*}

The central contribution is an end-to-end refinement from a symbolic
cross-path dependency to the runtime evidence allowed to discharge it.
\repair{} instantiates registered fail-closed gates and rechecks the full plan;
\por{} removes redundant interleavings through lifecycle, evidence, resource,
repair, and necessary-enabling footprints without changing declared outcomes.

\section{Unsafe Plan Trajectories and Release Authority}

\subsection{Safe commands can form an unsafe trajectory}

For handover followed by source-cell sleep, the UE must move before its source
cell turns off. The required dependency is
\begin{multline}
\Apply(\mathrm{ServingCell}(u)=c_t,v_h)\\
\prec \Apply(\mathrm{CellSleep}(c_s),v_s).
\label{eq:handover-example}
\end{multline}
A quota migration transfers $\Delta$ from donor slice $d$ to recipient $r$.
The decrease can be issued only after the increase applies:
\[
\Apply(q_r\leftarrow q_r+\Delta,v_q)
\prec
\Req(q_d\leftarrow q_d-\Delta,v_q).
\]
The stronger APPLY-to-REQUEST form prevents an executor from delivering the
decrease early to an endpoint that could apply it immediately.

\subsection{Release authority}

Only APPLY changes authoritative state. ACCEPT cannot release a dependency by
label alone; COMPLETE or OBSERVE may release it only when the registered event
contract places that event downstream of the matching APPLY transition.
Accordingly, \system{} represents each dependency as a typed relation from an
authoritative predecessor event to the successor's REQUEST or APPLY gate.

\section{Plan Certification with Event Contracts}

\subsection{Typed plan}

A compiled plan is
\[
\mathcal P=(A,\prec_0,\Gamma,\mathcal D,\Phi,H).
\]
$A$ contains typed actions and $\prec_0$ their declared order. $\Gamma$
contains the event contracts and intrinsic lifecycle delays; $\mathcal D$
contains endpoint choices and root REQUEST-ready bounds.
$\Phi=\Phi_S\land\Phi_L$ gives the safety and completion properties, and $H$
is the horizon. Runtime evidence and versions belong to the verifier state,
not to $\mathcal P$. Let
\[
\mathcal V_a^{\mathrm{all}}=\{R_a,D_a,A_a,P_a,C_a,O_a\},
\]
denote the universe of possible lifecycle events for action $a$, and let
$V_a\subseteq\mathcal V_a^{\mathrm{all}}$ be the subset present in its
registered contract. The six symbols denote REQUEST, DELIVERY, ACCEPT,
authoritative APPLY, mechanism COMPLETE, and independent OBSERVE. The
per-action event contract is
\[
\Gamma_a=(V_a,F_a,\delta_a,\kappa_a,T_a),
\]
where $F_a$ is a causal DAG, $\delta_a$ bounds edge delays, $\kappa_a$
assigns evidence signatures, and
$T_a=\operatorname{Term}(a)\in V_a$ is the declared terminal event.
$V_a$ includes $R_a,D_a,P_a$; a contract without $C_a$ can terminate at
$P_a$ or $O_a$. The compiler requires
$R_a\prec_{\Gamma_a}^{*}z$ for every $z\in V_a\setminus\{R_a\}$, and rejects
an unreachable event or terminal.
Only $R_a\prec_{\Gamma_a}D_a\prec_{\Gamma_a}P_a$ is mandatory. No order
involving $A_a,C_a,O_a$ is assumed without a declared edge. Define
\[
\begin{aligned}
\operatorname{Auth}(\Gamma_a)
={}&\{P_a\}\cup\{z\in V_a\cap\{C_a,O_a\}:\\
&P_a\preceq_{\Gamma_a}^{*}z
\land\kappa_a(z)\equiv\kappa_a(P_a)\}.
\end{aligned}
\]
Here $\equiv$ permits a different wire type but requires the same action,
target scope, version, and epoch. ACCEPT is not authoritative by label; a
direct-callback contract may omit $A_a,C_a$. The audited OCUDU contract is
\[
R_a\prec_{\Gamma_a}D_a\prec_{\Gamma_a}A_a
\prec_{\Gamma_a}P_a\prec_{\Gamma_a}C_a,\qquad
P_a\prec_{\Gamma_a}O_a,
\]
sets $T_a=C_a$, and assumes no unmeasured $C_a/O_a$ order. Distinct events
remain strictly ordered in the DAG even when their logical timestamps agree;
for example, $\theta_a(A_a)\leq\theta_a(P_a)$ permits a legal equal-time
$A_a$-then-$P_a$ order.

\subsection{State and transition relation}

The verifier state is
\[
s=(x,q,\ell,\theta,\sigma,e,\nu,\tau),
\]
where $x,q$ are authoritative radio/resource state;
$\ell_a(z)\in\{\mathrm{pending},\mathrm{fired}\}$ and $\theta_a(z)$ record
lifecycle status and firing time;
$\sigma_a(z)\in\mathbb N_0\cup\{\bot\}$ is the scheduled firing time;
$e,\nu$ store runtime evidence and versions; and $\tau\in\mathbb N_0$ is
logical time. Let
$\Pred_a(z)=\{u:(u,z)\in F_a\}$.
\[
\begin{aligned}
\mathrm{Ready}_s(z)\Longleftrightarrow {}&
\Pred_a(z)\subseteq\mathrm{Fired}(s)\\[-1mm]
&{}\land G_{\rm plan}\land G_{\rm id}\land G_{\rm repair},\\
L_s(z)={}&\max_{u\in\Pred_a(z)}
       \{\theta_a(u)+d^-_{uz}\},\\
U_s(z)={}&\min_{u\in\Pred_a(z)}
       \{\theta_a(u)+d^+_{uz}\},\\
\mathrm{Enabled}_s(z)\Longleftrightarrow {}&
\ell_a(z)=\mathrm{pending}\land\mathrm{Ready}_s(z)\\[-1mm]
&{}\land\sigma_a(z)\neq\bot\land\sigma_a(z)\leq\tau.
\end{aligned}
\]
$G_{\rm plan}$ enforces plan order and endpoint eligibility, $G_{\rm id}$
matches action, scope, version, epoch, and freshness, and $G_{\rm repair}$
enforces instantiated barriers. Initially $\sigma_a(z)=\bot$ for non-root
events. At initialization, the verifier nondeterministically assigns each root
REQUEST an immutable scheduled time from its interval in $\mathcal D$. When a
transition first makes any other $z$ ready by firing its last causal
predecessor or discharging a guard, it assigns
$\sigma_a(z)\in[\max\{\tau,L_s(z)\},U_s(z)]$; an empty interval makes $z$
unrealizable. Firing $z$ sets $\ell_a(z)=\mathrm{fired}$ and
$\theta_a(z)=\tau$, applies its registered update, and schedules newly ready
successors. Let
\[
\begin{aligned}
E_\tau(s)&=\{z:\mathrm{Enabled}_s(z)\},\\[-1mm]
\mathrm{Tick}(s)&\Longleftrightarrow
\tau<H\land\neg\Phi_L(s)\land E_\tau(s)=\varnothing.
\end{aligned}
\]
A tick transition changes only logical time, setting $\tau\leftarrow\tau+1$.
Thus a due event cannot be postponed by TICK. The implementation coalesces
consecutive ticks by jumping to the next scheduled time. At a nonempty
$E_\tau(s)$, the verifier branches over every event and legal equal-time
order; an unrealizable event or an empty transition set before completion is
a completion failure.
REQUEST/DELIVERY carry an endpoint. ACCEPT, COMPLETE, and OBSERVE
record typed evidence and time, whereas APPLY alone changes authoritative
state and version. Later evidence cannot rewrite its APPLY time.

\paragraph{Exact six-event semantics}
The reference retains every event read by a property, terminal predicate,
timing rule, or gate, and enumerates the DAG, integer delays, and legal
equal-time orders. Release requires matching kind, identity, scope, version,
epoch, and freshness; deadlock cannot pass on safety alone.

$\Exec(\mathcal P)$ contains the maximal executions through $H$:
a trace stops at completion, deadlock, or after exhausting event transitions
at $\tau=H$. Its monotone completion predicate is
$\Phi_L(s)\equiv\bigwedge_{a\in A}
[\ell_a(T_a)=\mathrm{fired}]$; a completed trace has time
$T_{\rm comp}(\rho)=\max_{a\in A}\theta_a(T_a)$. A plan is certified iff
\begin{align}
\forall \rho\in\Exec(\mathcal P),\ \forall s\in\rho &: \Phi_S(s),
\label{eq:safety}\\
\forall \rho\in\Exec(\mathcal P),\ \exists s\in\rho &: \Phi_L(s).
\label{eq:liveness}
\end{align}
A reject-only deadlock may satisfy \eqref{eq:safety} but violates
\eqref{eq:liveness}, so it is not a valid certificate.

\subsection{Properties and event dependencies}

The two-action dependency in \eqref{eq:handover-example} generalizes to a
handover group $G$: every declared member needs a matching target-cell APPLY
before source sleep.
\begin{multline}
\forall u\in G:\quad
\Apply(\mathrm{ServingCell}(u)=c_t,v_h)\\
\prec \Apply(\mathrm{CellSleep}(c_s),v_s).
\label{eq:handover-group}
\end{multline}
The service-floor property is
\[
\phi_{\mathrm{floor}}(s)\equiv
\sum_{k\in\mathcal S}q_k(s)\ge Q_{\min}.
\]
The exact compiler instantiates this property for a bounded two-action quota
migration.

\begin{lemma}[Two-action service-floor barrier]
\label{lem:floor}
Assume an initial allocation with
$\sum_k q_k\ge Q_{\min}$, one recipient increase $+\Delta$, one donor decrease
$-\Delta$, no other mutation of these quotas during the migration, and
reliable execution of issued actions within the declared horizon. The barrier
\[
\Apply(q_r\leftarrow q_r+\Delta,v)
\prec\Req(q_d\leftarrow q_d-\Delta,v),
\]
preserves $\phi_{\mathrm{floor}}$ for every admitted delivery and application
delay.
\end{lemma}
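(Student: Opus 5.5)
The plan is a state invariant along an arbitrary execution $\rho\in\Exec(\mathcal P)$ of the repaired plan. By the transition semantics, only APPLY events change authoritative state. Under the hypothesis that nothing else mutates these quotas, the only transitions that can change $\sum_k q_k$ are the recipient's APPLY, call it $P_r$, and the donor's APPLY, call it $P_d$. So $\sum_k q_k(s)$ is piecewise constant along $\rho$. It can take at most three values: the initial sum $S_0\ge Q_{\min}$; then $S_0+\Delta$ or $S_0-\Delta$ after the first of $P_r,P_d$ fires; and $S_0$ again once both have fired. It therefore suffices to show that no reachable state has $P_d$ fired while $P_r$ is pending.

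The key step is transferring the barrier to the donor's APPLY. First, the barrier is enforced by $G_{\rm repair}$ on the donor's REQUEST $R_d$. Thus $\mathrm{Ready}_s(R_d)$, and hence $\mathrm{Enabled}_s(R_d)$, requires that $P_r$ has fired with matching action, scope, version $v$ and epoch via $G_{\rm id}$. Consequently $\theta_r(P_r)\le\theta_d(R_d)$ in any trace where $R_d$ fires. For a root REQUEST with a pre-assigned $\sigma$, the point to stress is that the guard belongs to $\mathrm{Ready}$: the scheduled time only lower-bounds firing and does not bypass the guard.

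Second, the contract $\Gamma_d$ mandates $R_d\prec_{\Gamma_d}D_d\prec_{\Gamma_d}P_d$. Each event fires only when its DAG predecessors have fired, because $\Pred_d(z)\subseteq\mathrm{Fired}(s)$ is part of $\mathrm{Ready}$. So $P_d$ fires strictly after $R_d$ in the transition sequence, even at equal logical time, since distinct events are strictly ordered. Chaining gives $P_r$ before $P_d$ in every trace, whatever the admitted delays $\delta_d$, endpoint choices, or equal-time interleavings.

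Combining the two steps, the only intermediate value that can occur is $S_0+\Delta\ge Q_{\min}$, and after both APPLYs fire the sum returns to $S_0$. Hence $\phi_{\mathrm{floor}}$ holds at every state of every $\rho$, which is the safety half \eqref{eq:safety} restricted to this property. Reliable execution within $H$ is used only to make the claim non-vacuous: $P_r$ eventually fires, so the barrier does not deadlock the migration and completion \eqref{eq:liveness} is not violated. I would note this as a remark rather than as part of the floor invariant.

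The main obstacle I expect is ruling out that something other than the matching $P_r$ discharges the barrier. Two candidates need checking. One is a stale or foreign APPLY, which is excluded by $G_{\rm id}$'s version, epoch and freshness match. The other is an ACCEPT or COMPLETE event, which can release the barrier only if it lies in $\operatorname{Auth}(\Gamma_r)$, and that requires $P_r\preceq^{*}_{\Gamma_r}z$. If a COMPLETE- or OBSERVE-backed release is allowed, I would insert this one extra step: its membership in $\operatorname{Auth}$ places it causally after $P_r$, so $P_r$ has already fired whenever release occurs.
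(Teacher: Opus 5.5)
Your proposal is correct and uses the same argument as the paper: the aggregate sum starts at its initial value, rises by $\Delta$ at the recipient APPLY, and returns at the donor APPLY, while identity matching stops any other migration from releasing the decrease. You also make explicit two steps the paper leaves implicit: the REQUEST gate carries over to the donor APPLY through the mandatory chain $R_d\prec_{\Gamma_d}D_d\prec_{\Gamma_d}P_d$, and any contract-backed release event lies downstream of $P_r$. (ACCEPT is excluded from $\operatorname{Auth}(\Gamma_r)$ by definition, not only by where it sits in the contract.)
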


\emph{Proof.} The sum is unchanged initially, rises by $\Delta$ after the
recipient APPLY, and returns to its initial value after the donor APPLY;
identity matching prevents another migration from releasing the decrease.
Lemma~\ref{lem:floor} covers only the declared two-action aggregate floor.
Per-slice bounds, nonnegativity, total capacity, and all other obligations
remain in $\Phi_S$ and are checked by the whole-plan verifier.

\section{\repair{}: From Counterexample to Executable Certificate}
\label{sec:verasync}

\subsection{Counterexample core}

From an unsafe trace, \repair{} extracts the violated property, authoritative
APPLY transitions, missing evidence, versions, and radio/resource footprints.
\repair{} repairs over the original actions and endpoints by adding
authoritative-event precedence relations. When no declared template expresses
the required relation, the executor withholds the plan and reports
\textsc{unsupported}.
Define
$B(b,a;z,G)$ to open gate $G\in\{\Req,\Apply\}$ of action $a$ only after
$z\in\operatorname{Auth}(\Gamma_b)$ fires with matching type, scope, and
version.
$\mathfrak B_{\mathcal T}(\mathcal P)$ denotes the finite set of barriers
that the registered template library $\mathcal T$ can instantiate over the
actions in $\mathcal P$.
Writing $\Apply$ or $\Obs$ as the release event abbreviates $z=P_b$ or
$z=O_b$. The verifier reasons about that event, not the label of a transport
ACK. We call $B$ the symbolic barrier and its executor check the runtime
gate.

\subsection{Template-guided repair}

Algorithm~\ref{alg:verasync} preserves actions, endpoints, and planner-specified
request-ready times; barriers may delay issuance. Each iteration compiles one
property-relevant core and rechecks the full plan. Here
$\mathcal P\oplus\mathcal B$ installs $\mathcal B$ in $\mathcal P$ and
$K_{\max}=4$.
The verifier returns
\[
\begin{aligned}
\Verify(\cdot)\in\{&
\textsc{unsafe}(\rho,\phi),\ \textsc{incomplete}(\rho),\\[-1mm]
&\textsc{safe-and-complete}(\chi)\}.
\end{aligned}
\]
Here $\chi$ is a registry-bound execution manifest,
\[
\begin{aligned}
\chi&=(v_s,k_r,v_r,\epsilon,t_i,t_d,\Psi,\mathbf h),\\[-1mm]
\Psi&=(s_0,\mathcal P\!\oplus\!\mathcal B,M),
\end{aligned}
\]
Here $v_s$ is the schema version; $(k_r,v_r)$ is the registry key and revision;
$\epsilon$ is the endpoint epoch; $[t_i,t_d]$ is the validity interval; and
$M$ is the registry-supplied access map. $\mathbf h$ contains canonical
component digests, including the initial-state projection. A validator checks
them against an independently pinned registry
before execution. Thus $\chi$ is a bound execution manifest, not a proof trace;
the verifier remains responsible for the trajectory proof.

\begin{algorithm}[t]
\caption{\repair{} template-guided certification}
\label{alg:verasync}
\begin{algorithmic}[1]
\Require $\mathcal P=(A,\prec_0,\Gamma,\mathcal D,\Phi,H)$,
$s_0,\mathcal T,K_{\max}$
\Ensure certified $(\mathcal P\oplus\mathcal B,\chi)$,
\textsc{unsupported}, or \textsc{iteration-limit}
\State $\mathcal B\gets\varnothing$
\For{$k=0,\ldots,K_{\max}$}
  \State $r\gets\Verify(\mathcal P\oplus\mathcal B,s_0)$
  \State \textbf{if} $r=\textsc{safe-and-complete}(\chi)$ \textbf{return}
  $(\mathcal P\oplus\mathcal B,\chi)$
  \State \textbf{if} $r=\textsc{incomplete}(\rho)$ \textbf{return}
  \textsc{unsupported}$(r)$
  \State \textbf{if} $k=K_{\max}$ \textbf{return}
  \textsc{iteration-limit}$(r)$
  \State $\eta\gets\operatorname{ExtractCore}(r.\mathrm{counterexample})$
  \State $\mathcal F\gets
  \operatorname{Compile}(\eta,\mathcal P,s_0,\mathcal T)\setminus\mathcal B$
  \State \textbf{if} $\mathcal F=\varnothing$ \textbf{return}
  \textsc{unsupported}$(\eta)$
  \State $\mathcal B\gets\mathcal B\cup\mathcal F$
\EndFor
\end{algorithmic}
\end{algorithm}

The loop instantiates a registered template for each supported safety
counterexample and rechecks the composition. It returns an
\textsc{incomplete} result as \textsc{unsupported}: the current templates add
precedence gates but do not add actions, relax request-ready times, or claim a
liveness repair.
Because $\mathcal B$ grows within finite $\mathfrak B_{\mathcal T}(\mathcal P)$, an
uncapped loop terminates within $|\mathfrak B_{\mathcal T}(\mathcal P)|$ iterations or
returns \textsc{unsupported}; soundness follows from the final whole-plan
check. Of 728 confirmation components, 712 certify after one repair iteration
and 16 after two; none consumes the additional guard depth $K_{\max}=4$.

The declared template library implements the source-evacuation dependency in
\eqref{eq:handover-group} and analogous scoped or versioned APPLY dependencies
for capacity transfer and PRB/slice reprofiling.

For the redundancy audit, $\mathcal B$ is \emph{1-minimal} iff it certifies
and $\mathcal B\setminus\{b\}$ does not for every $b\in\mathcal B$.
Re-certified single-edge deletion reaches this fixed point; the component
oracle evaluates a stronger lexicographic objective.
For a plan $\mathcal P$ and barriers $\mathcal B$, we report
\[
U(\mathcal P,\mathcal B)=
1-\frac{|\operatorname{TC}(\prec_0\cup\prec_{\mathcal B})|}
{\binom{|A|}{2}},
\]
where $\operatorname{TC}$ is the strict transitive closure over distinct
action pairs. Thus $U$ is the fraction left unordered. Summary values are
macro-averages over plans, so every plan has equal weight.

For the confirmation population, every pruned application-hook repair lies in
the direct APPLY-to-APPLY sublanguage. On access-map components of at most five actions,
the oracle enumerates every acyclic subset of that sublanguage and minimizes
direct barriers, ordered pairs, then worst-case completion time under the
original requests. \por{} screens candidates; exact search rechecks each
winner and the composition. The oracle comparison is therefore stated for the
confirmation's application-gate language; Lemma~\ref{lem:floor}
separately establishes the APPLY-to-REQUEST service-floor repair.

\subsection{Versioned execution and event-causal concretization}

Each verified dependency is released by an authoritative event and may guard
the dependent action's REQUEST or APPLY.

\begin{definition}[Admissible runtime concretization]
\label{def:concretization}
For evidence record $e$, let
\[
\begin{aligned}
\beta_b&:\{P_b\}\rightarrow\operatorname{Auth}(\Gamma_b),\\[-1mm]
\mathrm{Fresh}_{\chi}(e,s)&\Longleftrightarrow
t_i\leq t(e)\leq t_{\rm now}(s)\leq t_d\\[-1mm]
&\quad{}\land t_{\rm now}(s)-t(e)\leq
\operatorname{TTL}(\operatorname{type}(e))\\[-1mm]
&\quad{}\land\operatorname{epoch}(e)=\epsilon,\\[-1mm]
\mathrm{Match}_{\chi}(e,z,s)&\Longleftrightarrow
\operatorname{sig}(e)=\kappa_b(z)\land\mathrm{Fresh}_{\chi}(e,s),\\
\mathrm{Adm}_{\chi}(\beta_b,e,s)&\Longleftrightarrow
\beta_b(P_b)=z\in\operatorname{Auth}(\Gamma_b)\\[-1mm]
&{}\land\mathrm{Match}_{\chi}(e,z,s)\\[-1mm]
&{}\land\ell_b(z)=\mathrm{fired}.
\end{aligned}
\]
A gate $G_a$ for $B(b,a;P_b,G_a)$ opens only under
$\mathrm{Adm}_{\chi}(\beta_b,e,s)$. The binding
$\kappa_b(z)\equiv\kappa_b(P_b)$ fixes action, target scope, version, and
epoch while allowing the wire type of $z$ to differ from APPLY. The validator
accepts only a $\Gamma_b$-conformant runtime trace.
\end{definition}

\begin{theorem}[Contract-relative release-order refinement]
\label{thm:concretization}
Under Definition~\ref{def:concretization}, every runtime release of $G_a$
satisfies
\[
P_b\preceq_{\rho} z\prec_{\rho}G_a
\quad\text{and}\quad
t(P_b)\leq t(z)\leq t(G_a).
\]
Hence the concrete gate trace refines the verified release-order dependency
$P_b\prec G_a$. Missing, stale, wrong-scope, wrong-version, or ACCEPT-only
evidence cannot release $G_a$. Concretization may delay or block the plan, so
bounded completion remains an independent verification obligation.
\end{theorem}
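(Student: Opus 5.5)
The argument unfolds the admissibility predicate of Definition~\ref{def:concretization} at the moment $G_a$ is released and chains three facts: the contract DAG ordering, the firing-time semantics, and causal precedence of the evidence consumer. Fix a runtime release of $G_a$ for a barrier $B(b,a;P_b,G_a)$. By the definition, the gate opens only if there is an evidence record $e$ with $\mathrm{Adm}_{\chi}(\beta_b,e,s)$. This yields $z=\beta_b(P_b)\in\operatorname{Auth}(\Gamma_b)$, $\mathrm{Match}_{\chi}(e,z,s)$, and $\ell_b(z)=\mathrm{fired}$ in the state $s$ where the gate opens.

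\textbf{Step 1: the order $P_b\preceq_\rho z$.} By the definition of $\operatorname{Auth}(\Gamma_b)$, either $z=P_b$, which is trivial, or $z\in\{C_b,O_b\}$ with $P_b\preceq^{*}_{\Gamma_b} z$. In the second case I would show by induction along the DAG path in $F_b$ that every event fires only after all its predecessors have fired. This holds because $\mathrm{Ready}_s(z)$ requires $\Pred_b(z)\subseteq\mathrm{Fired}(s)$, and $\mathrm{Enabled}$ requires $\mathrm{Ready}$. The validator accepts only $\Gamma_b$-conformant traces, so this inductive invariant transfers from the verifier semantics to the concrete trace $\rho$. Hence $P_b$ fired no later than $z$ in $\rho$.

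\textbf{Step 2: the order $z\prec_\rho G_a$.} The gate check reads $\ell_b(z)=\mathrm{fired}$ in the current state, so $z$ occurred strictly before the gate transition in $\rho$. For times, firing sets $\theta$ to the current $\tau$, and $\tau$ is monotone because only TICK changes it, and only upward. So $t(P_b)\le t(z)\le t(G_a)$. Alternatively, the lower bounds $L_s$ with nonnegative $d^-$ give $\theta(P_b)\le\theta(z)$, and the clause ``later evidence cannot rewrite its APPLY time'' keeps $t(P_b)$ fixed. Composing Steps 1 and 2 gives the displayed chain, which refines the symbolic relation $P_b\prec G_a$ checked by the verifier.

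\textbf{Step 3: rejection of bad evidence.} I would argue case by case that each listed failure falsifies a conjunct of $\mathrm{Adm}_\chi$. Missing evidence means no $e$ exists, or $\ell_b(z)\neq\mathrm{fired}$. Stale evidence violates $\mathrm{Fresh}_\chi$ through the TTL, validity-interval, or epoch conditions. Wrong scope or wrong version violates $\operatorname{sig}(e)=\kappa_b(z)\equiv\kappa_b(P_b)$. ACCEPT-only evidence fails because $A_b\notin\operatorname{Auth}(\Gamma_b)$ by definition, and $\beta_b$ must land in $\operatorname{Auth}$. The closing caveat needs only a remark: gates add guards, which can empty $E_\tau$ or delay firing. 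Completion is therefore not preserved by the refinement and stays as the separate obligation \eqref{eq:liveness}.

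\textbf{Main obstacle.} The delicate point is Step 1's transfer from the verifier's DAG semantics to the \emph{runtime} trace. One has to make precise that ``$\Gamma_b$-conformant'' means the concrete trace respects $F_b$ and evidence signatures. Without this, a $C_b$ record could be observed without a preceding $P_b$. I would state this assumption explicitly as the contract-relativity hypothesis of the theorem. A second point to check is equal-time events. Distinct DAG events are strictly ordered even at equal timestamps, which justifies using $\preceq_\rho$ alongside non-strict time inequalities.
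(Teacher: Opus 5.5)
Your proposal is correct and is essentially the paper's own argument, expanded: opening $G_a$ exhibits a fired $z=\beta_b(P_b)\in\operatorname{Auth}(\Gamma_b)$, $\Gamma_b$-conformance of the validated trace gives $P_b\preceq_\rho z$, and the gate's fired-check gives $z\prec_\rho G_a$, while your Step~3 spells out the paper's one-line remark that failed validation and ACCEPT-only evidence leave the gate closed. One small tightening: an ACCEPT record is rejected because it cannot satisfy the $\mathrm{Match}_\chi$ conjunct $\operatorname{sig}(e)=\kappa_b(z)$ for an authoritative $z$ (and, if $P_b$ has not yet fired, also because no $z\in\operatorname{Auth}(\Gamma_b)$ can have fired), not merely because $A_b\notin\operatorname{Auth}(\Gamma_b)$.
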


\emph{Proof.} Opening $G_a$ identifies a fired $z=\beta_b(P_b)$.
Contract conformance gives $P_b\preceq_{\rho}z$, and the gate gives
$z\prec_\rho G_a$; failed validation and ACCEPT-only evidence remain closed.

The executor binds each barrier to a versioned APPLY callback or authorized
downstream evidence. Timeouts never count as APPLY; requests and evidence are
idempotent, restarts advance the certificate epoch, and all lifecycle
timestamps declared in $\Gamma_a$ remain reconstructible.

An APPLY gate also requires a registered application hook that can hold the
successor before state mutation. Otherwise the compiler lowers the dependency
and verifies the stronger executable order:
\[
P_b\prec P_a\ \mapsto\
\begin{cases}
P_b\prec P_a, & \operatorname{AppGate}(a),\\
P_b\prec R_a, & \text{otherwise}.
\end{cases}
\]
The lowered plan receives a new manifest only after full safety-and-completion
verification; the executor never treats request release as an APPLY gate.

\section{Scalable Certification with \por{}}
\label{sec:mipor}

\subsection{Exact reference and transparent projection}

The backend projects ACCEPT/COMPLETE only on a single-predecessor,
single-successor chain with independent finite delays and no branch, mutation,
semantic read, or gate; incident intervals combine by Minkowski sum. All other
cases use the six-event reference.

\begin{lemma}[Transparent lifecycle projection]
\label{lem:projection}
Contracting only transparent $A_a,C_a$ nodes preserves bounded safety labels
and the minimum and maximum occurrence time of every retained terminal event.
\end{lemma}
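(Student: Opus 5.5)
The plan is to build a bisimulation-style correspondence between the six-event reference semantics and the contracted semantics. Both use the transition relation of the state-and-transition subsection, so the argument works directly with executions $\rho\in\Exec(\mathcal P)$. Fix an action $a$ and a transparent chain $u\prec_{\Gamma_a} y\prec_{\Gamma_a} w$ with $y\in\{A_a,C_a\}$. Transparency gives $\Pred_a(y)=\{u\}$ and $w$ as the only successor of $y$. It also gives $\kappa_a(y)$ no role in any gate, $G_{\rm repair}$ no reference to $y$, $\Phi_S$ and every terminal predicate no read of $y$, and no update to $x,q,\nu$ from firing $y$. Contract $y$ into the edge $(u,w)$ with delay interval $[d^-_{uy}+d^-_{yw},\,d^+_{uy}+d^+_{yw}]$, the Minkowski sum. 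Contracting one node at a time suffices, by induction on the number of contracted nodes. Each contraction again yields a transparent-chain situation for the remaining nodes, because no branch or read is introduced.

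\textbf{Step 1 (projection of runs).} For a reference execution $\rho$, delete the firing of $y$ and the bookkeeping entries $\ell_a(y),\theta_a(y),\sigma_a(y)$, and keep every other transition. Since $y$ has a single predecessor, its scheduled time satisfies $\sigma_a(y)\in[\theta(u)+d^-_{uy},\theta(u)+d^+_{uy}]$. The lower bound $\max\{\tau,L_s\}$ collapses to $\theta(u)+d^-_{uy}$ because $y$ becomes ready exactly when $u$ fires at $\tau=\theta(u)$. Likewise $\sigma_a(w)$ lies within $\theta(y)+[d^-_{yw},d^+_{yw}]$, so $\theta(w)-\theta(u)$ lies in the Minkowski sum. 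Independence of the finite delays matters here. The projected run therefore respects the contracted schedule, and because $y$ never mutates authoritative state or enables a gate, every other enabledness condition is unchanged. The one subtle point is the TICK rule, which is blocked while $y$ is enabled. Deleting $y$ removes only a zero-duration event, and coalesced ticks preserve the sequence of $\tau$ values at which retained events fire. The projected run is therefore a contracted execution with identical authoritative states at every retained step.

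\textbf{Step 2 (lifting, the main obstacle).} Conversely, given a contracted run in which $w$ fires at $\theta(u)+\delta$ with $\delta$ in the Minkowski sum, I must split $\delta=\delta_1+\delta_2$ with $\delta_i$ in the respective intervals, which integer intervals always allow. I then insert $y$ at $\theta(u)+\delta_1$. The difficulty is that the insertion must not change the legal equal-time orders or the deadlock/TICK structure for other actions. I would argue this from the no-gate and no-read conditions: $y$ commutes with every event of other actions and with non-$w$ events of $a$. Inserting it in any legal position at time $\theta(u)+\delta_1$ therefore leaves the rest of the run enabled. Completion or deadlock is unaffected, because $y$ is not terminal and is always eventually enabled within its bound. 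The one exception is when $\theta(u)+\delta_1>H$. Here I would use $\delta_1\le\delta$ and $\theta(w)\le H$, which give $\theta(u)+\delta_1\le H$.

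\textbf{Step 3 (conclusions).} Steps 1 and 2 give a surjection between reference and contracted executions that preserves authoritative state sequences and retained event times. Safety labels, which are functions of retained state by transparency, therefore agree on corresponding runs, so \eqref{eq:safety} holds for one iff it holds for the other. Retained terminal times are preserved run by run. Since the correspondence is onto in both directions, the sets of achievable occurrence times coincide, and hence so do their minima and maxima. Finally, I would remark that Step 2 fails without single-predecessor independence: a second predecessor of $w$ correlates $\delta_2$ with other times. This explains why those cases are routed to the six-event reference.
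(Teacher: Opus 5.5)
Your proposal is correct and takes the same approach as the paper's sketch, in more detail: you contract traces by summing the two realized delays, you lift by splitting each composed delay into an admitted integer pair, and you use the chain and no-read premises to carry enabledness, labels, mutations, and terminal times in both directions. One correction to your closing remark: with a second predecessor $v$ of $w$, Step~2 still goes through, since you can choose $\theta(y)$ with $t-\theta(y)\in[d^-_{yw},d^+_{yw}]$ and $v$ constrains $t$ identically in both models; the step that breaks is Step~1, because an early choice of $\theta(y)$ can empty $w$'s reference window and make $w$ unrealizable even though the composed window is nonempty, so the failure falls on the soundness-critical side.
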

\emph{Proof sketch.} Trace contraction sums the two realized delays; every
composed delay decomposes into an admitted pair. The chain and no-read premises
preserve enabledness, labels, mutations, and terminal times both ways.

\subsection{RAN lifecycle and property footprints}

After Lemma~\ref{lem:projection}, \texttt{\_event\_access} maps retained
$R,D,P,O$ events to conservative semantic read/write tokens;
nontransparent $A_a$ or $C_a$ selects exact verification. The tokens cover
lifecycle status and predecessors; endpoint options; radio, load, PRB, and
slice state; evidence identity and freshness; terminal time, TTL, and horizon;
rollback snapshots; and repair-order closure.

Conflicts include read--write/write--write intersections, causality, barriers
and their closure, same-action events, and global rollback APPLY. A concurrent
APPLY also conflicts when a later rollback may read its pre-mutation snapshot.

\por{} runs in two stages. For
$\Omega=(\Phi,\mathcal B,\Gamma,\mathcal D,H)$, let
$\mathcal R_{\rm loc}(\Omega)$ contain every safety property, repair gate,
identity/freshness read, request-ready time, and per-action terminal read. For
action components $\mathcal K$, $\operatorname{Supp}(r)\subseteq A$ contains
the actions read by obligation $r$. Define
\[
\begin{aligned}
\mathsf{LocalityOK}(\mathcal K;\Omega)\Longleftrightarrow {}&
\forall r\in\mathcal R_{\rm loc}(\Omega),\ \exists K_i\in\mathcal K:\\[-1mm]
&\operatorname{Supp}(r)\subseteq K_i.
\end{aligned}
\]
The all-actions completion predicate and horizon are composed after component
exploration. Let $k_i\in\{\mathsf U,\mathsf I,\mathsf S\}$ be the verdict for
$r_i$; when all are $\mathsf S$, set
$T_{\min}=\max_iT_{i,\min}$ and $T_{\max}=\max_iT_{i,\max}$. Then
\[
C_\oplus(r_{1:m})=
\begin{cases}
\mathsf U, & \exists i:k_i=\mathsf U,\\
\mathsf I, & \exists i:k_i=\mathsf I,\\
\mathsf I, & (\forall i:k_i=\mathsf S)\land T_{\max}>H,\\
\mathsf S(T_{\min},T_{\max}), & \text{otherwise}.
\end{cases}
\]
Stage~1 requires $\mathsf{LocalityOK}$; a cross-component read disables it.
Stage~2 selects the lexicographically rooted persistent component at each
logical time. At state $s$, let $\tau=\tau(s)$ and use the exact-semantics
due set $E_\tau(s)=\{e:\mathrm{Enabled}_s(e)\}$.
$\mathcal Z_{s,\tau}$ adds zero-delay lifecycle successors, while
$\mathcal N_{s,\tau}$ adds events made ready by a plan or repair guard before
time advances. Both are conservative over endpoint and delay choices. Define
the future closure and lifted conflict together:
\[
\begin{aligned}
\mathcal F_{s,\tau}(e)&=\mu X.\bigl(\{e\}\cup\mathcal Z_{s,\tau}(X)
\cup\mathcal N_{s,\tau}(X)\bigr),\\[-1mm]
e\sim_{M,s}^{+}v&\Longleftrightarrow
\exists u\in\mathcal F_{s,\tau}(e),w\in\mathcal F_{s,\tau}(v):
u\sim_{M}w.
\end{aligned}
\]
Let $e\leadsto_{s,\tau} w$ mean that firing $e$ can make pending event $w$
ready before time advances. The registry admits Stage~2 only when
\[
\begin{aligned}
\mathsf{EnableSound}(\mathcal P,\mathcal B,M)
&\Longleftrightarrow {}\\[-1mm]
&\forall s\in\operatorname{Reach}_{\le H}(\mathcal P\oplus\mathcal B),\\[-1mm]
&\forall e\in E_{\tau(s)}(s),\ \forall w:\\[-1mm]
&e\leadsto_{s,\tau(s)}^{*}w\Rightarrow
w\in\mathcal F_{s,\tau(s)}(e).
\end{aligned}
\]
This is a semantic proof obligation; the implementation checks a finite,
conservative registry condition sufficient for it. Let
\[
\begin{aligned}
Q^0_{\Gamma}&=\{(u,v)\in F_a:d^-_{uv}=0\},\\[-1mm]
Q^0_G&=\{(u,v):u\Rightarrow_G^0v\},
\end{aligned}
\]
where $u\Rightarrow_G^0v$ denotes discharge of a same-tick plan or repair
guard, as derived from the registered DAGs, request-ready bounds, $\prec_0$,
and $\mathcal B$. Let $\widehat Q_Z$ and $\widehat Q_N$ be the corresponding
edges enumerated by $\mathcal Z$ and $\mathcal N$. Define
\begin{equation}
\begin{aligned}
\mathsf{EnableCover}(\mathcal P,\mathcal B,M)
\Longleftrightarrow {}&Q^0_{\Gamma}\subseteq\widehat Q_Z\\[-1mm]
&{}\land Q^0_G\subseteq\widehat Q_N.
\end{aligned}
\label{eq:enable-cover}
\end{equation}
The compiler also requires every rule that can change readiness without
advancing time to appear in $Q^0_{\Gamma}\cup Q^0_G$. These rules comprise
zero-delay DAG edges, plan and repair guards, and root request-ready
equalities. Write $\mathsf{StaticEnableOK}(\mathcal P,\mathcal B,M)$
for this registry check conjoined with Eq.~\eqref{eq:enable-cover}.

\begin{lemma}[Syntactic enablement coverage]
\label{lem:enable-cover}
If $\mathsf{StaticEnableOK}(\mathcal P,\mathcal B,M)$ holds, then
$\mathsf{EnableSound}(\mathcal P,\mathcal B,M)$ holds.
\end{lemma}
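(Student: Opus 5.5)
The plan is to prove $\mathsf{EnableSound}$ by induction on the length of a same-tick enabling chain. First, unfold $e\leadsto_{s,\tau}w$ into one-step enabling relations $e=u_0\leadsto u_1\leadsto\cdots\leadsto u_n=w$, all occurring at the fixed logical time $\tau=\tau(s)$, where each $u_i$ becomes ready when $u_{i-1}$ fires. Since $\mathcal F_{s,\tau}(e)$ is the least fixed point of $X\mapsto\{e\}\cup\mathcal Z_{s,\tau}(X)\cup\mathcal N_{s,\tau}(X)$, it is closed under $\mathcal Z_{s,\tau}$ and $\mathcal N_{s,\tau}$. It therefore suffices to establish the single-step claim: if $u\in\mathcal F_{s,\tau}(e)$ and $u\leadsto_{s,\tau}u'$, then $u'\in\mathcal Z_{s,\tau}(\{u\})\cup\mathcal N_{s,\tau}(\{u\})$. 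Induction on $n$ then gives $w\in\mathcal F_{s,\tau}(e)$.

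The single-step claim is a case analysis on why $u'$ becomes ready without time advancing. By the transition relation, $\mathrm{Ready}_s(u')$ is a conjunction of four conditions: causal predecessors fired, $G_{\rm plan}$, $G_{\rm id}$, and $G_{\rm repair}$. Firing $u$ can change only one conjunct at a time, and the registry side condition of $\mathsf{StaticEnableOK}$ says that every rule that can change readiness at a fixed tick is a zero-delay DAG edge, a plan or repair guard discharge, or a root request-ready equality. The cases are then as follows.
\begin{itemize}
\item If $u'$ became ready by its last causal predecessor $u$ firing, and $u'$ is also due at the same $\tau$, then $\sigma(u')\le\tau=\theta(u)$ forces $L_s(u')\le\tau$, so $d^-_{uu'}=0$. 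Hence $(u,u')\in Q^0_\Gamma\subseteq\widehat Q_Z$.
\item If readiness came from discharging a plan or repair guard, then $u\Rightarrow^0_G u'$, so $(u,u')\in Q^0_G\subseteq\widehat Q_N$.
\item Root request-ready equalities are covered by the same registry condition, via $Q^0_G$ and $\mathcal D$.
\end{itemize}
The conservativity of $\mathcal Z,\mathcal N$ over endpoint and delay choices ensures that the enumerated edges apply regardless of which nondeterministic $\sigma$ value was chosen.

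The main obstacle is the first case, namely the claim that positive lower delay can never produce a same-tick enabling. The subtlety is that $\leadsto$ means "can make ready before time advances," and readiness ($\mathrm{Ready}$) differs from being due ($\sigma\le\tau$). I would fix the reading of $\leadsto$ as enabling within $E_\tau$, or at least as scheduling with $\sigma=\tau$. Then $\sigma(u')\ge\max\{\tau,L_s(u')\}$ and $\theta(u)=\tau$ give $d^-_{uu'}=0$ whenever $\sigma(u')=\tau$. If $\leadsto$ is read more loosely, as mere readiness, I would argue that an event made ready at $\tau$ but scheduled later can only influence $E_\tau$ through a further guard discharge. That discharge is again covered by $Q^0_G$, so the induction still closes. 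Finally, I would check that the guards $G_{\rm id}$ (evidence matching and freshness) fall under "plan and repair guards" in the registry rule, so that no readiness change escapes the enumeration.
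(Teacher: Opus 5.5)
Your proposal is correct and follows essentially the same route as the paper: the registry premise makes $Q^0_{\Gamma}\cup Q^0_G$ exhaustive for same-tick readiness changes, Eq.~\eqref{eq:enable-cover} places each immediate successor in $\mathcal Z_{s,\tau}$ or $\mathcal N_{s,\tau}$, and induction along the least-fixed-point closure gives $w\in\mathcal F_{s,\tau}(e)$. Your explicit derivation of $d^-_{uu'}=0$ from due-ness is a sharpening that the paper instead folds into the registry premise; your loose-reading fallback is unnecessary and would not by itself give $w\in\mathcal F_{s,\tau}(e)$ for events made ready at $\tau$ but scheduled later, so you should simply commit to reading $\leadsto_{s,\tau}$ as entry into $E_\tau$.
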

\emph{Proof sketch.} The registry premise makes
$Q^0_{\Gamma}\cup Q^0_G$ exhaustive. Equation~\eqref{eq:enable-cover} places
each immediate successor in $\mathcal Z$ or $\mathcal N$; induction places
every $e\leadsto^*_{s,\tau}w$ in $\mathcal F_{s,\tau}(e)$. Failed projection
or registry coverage selects exact verification.

$\operatorname{Options}(s,e,\tau(s))$ enumerates endpoint and equal-time
variants of firing $e$, together with schedule choices for successors that
$e$ newly makes ready. Events outside $C^\star$ stay due and commute past the
selected closure. The Stage~1 ablation branches over all enabled events with
the same state, options, and checks. Plans of at most eight actions use exact
search; larger plans use \por{} only after the declared checks pass.

\begin{table*}[t]
\centering
\caption{\textbf{Evidence coverage.} Plan certification establishes
cross-endpoint coverage; replay and the live E2 audit ground event bindings.}
\label{tab:endpoint-map}
\begin{tabular}{@{}L{0.15\textwidth}L{0.23\textwidth}L{0.29\textwidth}
                L{0.25\textwidth}@{}}
\toprule
Validation layer & Scope & Lifecycle evidence & Establishes \\
\midrule
Cross-endpoint certification &
144 plans: A1+E2 (48), A1+E2+O1 (80), E2+O1 (16) &
Per-action versioned APPLY/OBSERVE evidence under endpoint-specific
contracts and delay envelopes &
Plan-level safety and bounded completion; portable REQUEST-gate lowering and
registered application-gate repair \\
Mechanism-grounded replay &
E2 handover and quota; O1-bound cell sleep &
Serving-cell transition, scheduler version update, PHY sleep commit,
\texttt{HandoverEndOk}, and packet sinks &
Certificate-to-mechanism event binding and release behavior \\
Independent live contract audit &
OCUDU E2SM-RC Style~2/Action~6, 30 controls &
DU/MAC APPLY, executor COMPLETE, return ACK, and scheduler effect &
Live ACCEPT/APPLY/COMPLETE ordering on the audited OCUDU E2SM-RC path \\
\bottomrule
\end{tabular}
\end{table*}

\begin{algorithm}[t]
\caption{Two-stage \por{} verification}
\label{alg:mipor}
\begin{algorithmic}[1]
\Require $\mathcal P=(A,\prec_0,\Gamma,\mathcal D,\Phi,H),s_0,\mathcal B,
M$
\Ensure $k\in\{\mathsf U,\mathsf I,\mathsf S\}$; extrema if $k=\mathsf S$
\State $\Omega\gets(\Phi,\mathcal B,\Gamma,\mathcal D,H)$
\State $G_{\rm conf}\gets$ static action-conflict graph from
$M,\Gamma,\mathcal B,\Phi$
\If{not $\mathsf{StaticEnableOK}(\mathcal P,\mathcal B,M)$}
  \State \Return $\operatorname{Explore}
  (\mathcal P,s_0,\mathcal B,\textsc{Next}_{\rm exact})$
\EndIf
\State $\mathcal K\gets\operatorname{Components}(G_{\rm conf})$
\If{$|\mathcal K|>1$ and $\mathsf{LocalityOK}(\mathcal K;\Omega)$}
  \State $r_i\gets\operatorname{Explore}
  (\mathcal P[K_i],s_0,\mathcal B[K_i],\textsc{Next}_{\rm MI})$
  for each $K_i\in\mathcal K$
  \State \Return $\operatorname{Compose}(r_1,\ldots,r_{|\mathcal K|})$
  \Comment{Stage 1}
\EndIf
\State \Return $\operatorname{Explore}
(\mathcal P,s_0,\mathcal B,\textsc{Next}_{\rm MI})$
\Function{\textsc{Next}$_{\rm MI}$}{$s,E_{\tau}$}
  \If{$E_\tau=\varnothing$}
    \State \Return $\{\mathrm{Tick}\}$ if $\mathrm{Tick}(s)$, else
    $\varnothing$
  \EndIf
  \State $\mathcal F_{s,\tau}(e)\gets\operatorname{SameTickClosure}(s,e)$
  for each $e\in E_\tau$
  \State Build $G_{\tau}^{+}=(E_{\tau},\sim_{M}^{+})$, including
  barriers and rollback-visible snapshots
  \State $a^\star\gets\min\{\operatorname{action}(e):e\in E_{\tau}\}$
  \State $C^\star\gets$ component of $G_{\tau}^{+}$ containing $a^\star$
  \State \Return $\{(e,o):e\in C^\star,\;
  o\in\operatorname{Options}(s,e,\tau(s))\}$ \Comment{Stage 2}
\EndFunction
\end{algorithmic}
\end{algorithm}

\begin{definition}[Access-map soundness]
\label{def:access-sound}
For transition $t$, let
$\mathrm{RW}_{\rm sem}(t)=(R_{\rm sem}(t),W_{\rm sem}(t))$ and
$\mathrm{RW}_{M}(t)=(R_{M}(t),W_{M}(t))$, with
componentwise containment; $\gamma$ concretizes declared tokens. Then
\[
\mathsf{AccessSound}(M)\Longleftrightarrow
\forall t:\ \mathrm{RW}_{\rm sem}(t)
\subseteq\gamma(\mathrm{RW}_{M}(t)).
\]
$\mathrm{RW}_{\rm sem}$ includes enabledness, property, evidence/freshness,
state-key, and terminal-time accesses. An unassigned access disables
decomposition.
\end{definition}

The typed plan supplies lifecycle, gate, identity, freshness, and timing
tokens; a pinned registry supplies mechanism-specific resource accesses.
Missing token classes are rejected and unmapped accesses select exact search.
The theorem is registry-relative, and Section~\ref{sec:evaluation} exercises
that premise by access mutation and exact differential tests.

\begin{definition}[Implementation independence]
\label{def:independence}
For due transitions $u,v$, let $\mathrm{Conf}_s$ collect causal, same-action,
barrier, access-map, and rollback-snapshot conflicts. Write $u(s)$ for the
successor. The canonical key $\Xi(s)$ contains logical time; lifecycle status,
timestamps, and schedules; every radio/resource field read by a remaining
transition or property; evidence identity, version, epoch, and freshness;
terminal status; and rollback snapshots. Then
\[
\begin{aligned}
\mathrm{Conf}_s={}&\mathrm{Caus}\cup\mathrm{Same}\cup\mathrm{Bar}
\cup\mathrm{RW}_{M}\cup\mathrm{Snap}_s,\\[-1mm]
u\,\mathcal I_s\,v\Longleftrightarrow {}&
\neg\mathrm{Conf}_s(u,v)\\[-1mm]
&{}\land\mathrm{Enabled}_{u(s)}(v)\land\mathrm{Enabled}_{v(s)}(u)\\[-1mm]
&{}\land\Xi(v(u(s)))=\Xi(u(v(s))).
\end{aligned}
\]
\end{definition}

\begin{lemma}[AccessSound bridge]
\label{lem:access-sound}
If $\mathsf{AccessSound}(M)$ holds and the implemented conflict
detector establishes $\neg\mathrm{Conf}_s(u,v)$ for due transitions $u,v$,
then $u\,\mathcal I_s\,v$.
\end{lemma}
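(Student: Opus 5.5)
The plan is to discharge the three conjuncts of $u\,\mathcal I_s\,v$ one at a time. Each one comes from the premise that no conflict class in $\mathrm{Conf}_s$ fires, together with the containment $\mathrm{RW}_{\rm sem}(t)\subseteq\gamma(\mathrm{RW}_M(t))$ supplied by $\mathsf{AccessSound}(M)$. First, I will use that containment to lift the detector's verdict to the semantic level. Suppose there were a semantic read--write or write--write overlap between $u$ and $v$. Then it would appear among the declared tokens of $M$, and so $\mathrm{RW}_M$ would report a conflict, contrary to hypothesis. Hence $W_{\rm sem}(u)\cap(R_{\rm sem}(v)\cup W_{\rm sem}(v))=\varnothing$, and symmetrically for $v$. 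Because $\mathrm{RW}_{\rm sem}$ by Definition~\ref{def:access-sound} includes enabledness, property, evidence/freshness, state-key and terminal-time accesses, this disjointness covers everything the transition relation of Section~III can read.

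\textbf{Mutual enabledness.} Since $u,v$ are due, $\mathrm{Enabled}_s(v)$ holds with $\sigma_a(v)\le\tau$. Firing $u$ does not advance $\tau$, because only TICK does. The absence of $\mathrm{Same}$ means $u$ cannot set $\ell(v)=\mathrm{fired}$. The absence of $\mathrm{Caus}$ and $\mathrm{Bar}$ means $u$ is not a predecessor of $v$ under $F_a$, $\prec_0$ or $\mathcal B$, so $\mathrm{Ready}(v)$ cannot be lost. Every other input to $\mathrm{Ready}$ and $G_{\rm id}$ is a semantic read of $v$: identity, freshness, epoch and endpoint eligibility. None of these is written by $u$, by the disjointness above. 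It follows that $\mathrm{Enabled}_{u(s)}(v)$, and $\mathrm{Enabled}_{v(s)}(u)$ follows symmetrically. One point needs care here. The schedule $\sigma$ is immutable once assigned, and $u$ may newly schedule successors, but those successors are not $v$ because $v$ is already scheduled.

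\textbf{Key equality.} Next I will compare $\Xi(v(u(s)))$ and $\Xi(u(v(s)))$ field by field. Logical time is equal in both orders. Lifecycle status and timestamps are set to fired and $\tau$ for both events in either order. Radio/resource fields are updated only by APPLY, and the two updates touch disjoint write sets that neither transition reads. The updates therefore commute, and any field read by a remaining transition or property ends with the same value in both orders. Evidence and versions follow by the same disjointness argument. Terminal status is covered by the terminal-time tokens. Rollback snapshots are handled by the absence of $\mathrm{Snap}_s$: no later rollback reads a pre-mutation value that the order could change.

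\textbf{Main obstacle.} I expect the hardest step to be the successor schedules. Firing $u$ may assign a nondeterministic $\sigma$ to newly ready events, and so may $v$, and the resulting interval $[\max\{\tau,L\},U]$ depends on $\theta$ of predecessors and on $\tau$. I would argue that in both orders each event fires at the same $\tau$. Any event made ready by both would be a common successor. That would either be an $F_a$ edge from both, which is possible only within one action and so excluded by $\mathrm{Same}$, or a guard read covered by $\mathrm{Caus}/\mathrm{Bar}$. The schedule option sets therefore coincide, and with matching option choices (as enumerated by $\operatorname{Options}$) the keys agree. This step relies on the transition labels including the option choice, and I would make that explicit.
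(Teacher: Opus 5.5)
Your proposal is correct and follows the paper's route. The paper's proof is the one-line claim that conservative access and rollback rules yield mutual enabledness, commutativity, and identical future-relevant state; you spell that out by using AccessSound to lift $\neg\mathrm{RW}_M$ to semantic disjointness, then proving enabledness, then checking $\Xi$ field by field, with $\mathrm{Snap}_s$ handling rollback.

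One small correction to your schedule step. A common successor made ready through a cross-action plan or repair guard is not excluded by $\mathrm{Caus}$ or $\mathrm{Bar}$, since those relate $u$ and $v$ directly. It is excluded by the $\mathrm{RW}_M$ disjointness you already derived: whichever of $u,v$ fires second must read the other's lifecycle status to decide the successor's readiness.
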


\emph{Proof.} Conservative access and rollback rules give mutual enabledness,
commutativity, and identical future-relevant state.

\begin{lemma}[Equal-time diamond]
\label{lem:diamond}
If $u\,\mathcal I_s\,v$, then $uv$ and $vu$ reach the same future-relevant state and
neither order changes whether a declared property is violated.
\end{lemma}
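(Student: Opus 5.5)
The proof unpacks Definition~\ref{def:independence} directly. We do not need Lemma~\ref{lem:access-sound} here, since we assume $u\,\mathcal I_s\,v$ as the premise. By definition, $u\,\mathcal I_s\,v$ gives three facts: $\mathrm{Enabled}_{u(s)}(v)$, $\mathrm{Enabled}_{v(s)}(u)$, and $\Xi(v(u(s)))=\Xi(u(v(s)))$. The first two make both sequences $uv$ and $vu$ executable from $s$ as legal equal-time orders. Both fire at the same $\tau$, since enabledness at $u(s)$ and $v(s)$ means no TICK intervenes. The third fact is precisely the claim that the two orders reach the same future-relevant state, once we identify ``future-relevant state'' with the canonical key $\Xi$. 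So the first half of the lemma is immediate, and the work is in the second half.

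For the property claim we must compare the two paths $s\to u(s)\to v(u(s))$ and $s\to v(s)\to u(v(s))$. This means comparing three states on each path, not only the endpoints.
\begin{itemize}
\item \emph{The initial state.} The state $s$ is shared by both paths.
\item \emph{The endpoints.} Every declared safety predicate in $\Phi_S$, and the completion predicate $\Phi_L$, reads only fields included in $\Xi$. These fields are radio/resource fields read by a property, evidence identity/version/epoch/freshness, terminal status, and timestamps. Hence a predicate evaluated at $v(u(s))$ and at $u(v(s))$ gives the same value, because it factors through $\Xi$.
\item \emph{The intermediate states $u(s)$ and $v(s)$.} This is the main step, and the main obstacle. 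A violation could appear in $u(s)$ but not in $v(s)$, which would make the order matter.
\end{itemize}

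For the intermediate states I would use $\neg\mathrm{Conf}_s(u,v)$, and in particular the absence of an $\mathrm{RW}_M$ conflict. A property $\phi$ violated at $u(s)$ but not at $s$ must read some field that $u$ writes. Because property reads are included in the access tokens (Definition~\ref{def:access-sound} lists property accesses inside $\mathrm{RW}_{\rm sem}$), that field is not written by $v$. So the value of that field at $u(s)$ equals its value at $v(u(s))$. If $\phi$ reads only fields written by $u$, or untouched by either transition, then its value at $u(s)$ equals its value at $v(u(s))$. Its value there equals its value at $u(v(s))$ by the endpoint step. So the violation is also witnessed on the $vu$ path.

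Some properties read fields from both footprints, for example the aggregate $\phi_{\mathrm{floor}}$. For these, the property's read set makes it a reader of both $u$'s and $v$'s writes. The access map then records a property-mediated read--write conflict between $u$ and $v$, which contradicts $\neg\mathrm{Conf}_s$. This is the delicate point: it relies on the convention that the property footprint counts as a shared read token in $\sim_M$. I would state that convention explicitly and appeal to the registry premise rather than derive it.

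With this convention in place, every property's truth value along $uv$ is matched by a state on the $vu$ path. Hence the two orders agree on whether any declared property is violated.
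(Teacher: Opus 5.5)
Your proof is correct and follows the paper's idea. The paper's entire proof is that the lemma is immediate from Definition~\ref{def:independence}; your write-up makes explicit what that one-liner hides.

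The conjuncts that the paper's appeal actually uses are mutual enabledness and $\Xi(v(u(s)))=\Xi(u(v(s)))$, with every property read contained in $\Xi$. These settle only the first clause and the property values at the final state. But \eqref{eq:safety} is evaluated at every state of a trace, including $u(s)$ and $v(s)$. You are right that this matters. Two APPLYs fired at the same tick receive the same timestamp in either order. So for the handover/sleep pair of \eqref{eq:handover-example}, or the quota pair under $\phi_{\mathrm{floor}}$, the two orders agree on $\Xi$ while only one passes through an unsafe intermediate state.

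Your case split closes this. A violation at $s$ is shared by both paths, and an endpoint violation transfers through $\Xi$. A violation at $u(s)$ is pushed to $v(u(s))$, because $v$ writes nothing the property reads, and then moved to $u(v(s))$ by $\Xi$-equality. This correctly puts the burden on the conjunct $\neg\mathrm{Conf}_s(u,v)$ together with property-mediated read--write conflicts.

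The cost is that $\neg\mathrm{Conf}_s(u,v)$ is a statement about the declared tokens in $M$. To conclude that $v$ does not \emph{semantically} write the relevant fields, you need $\mathsf{AccessSound}(M)$. That premise comes from Definition~\ref{def:access-sound}, which is also where property accesses enter $\mathrm{RW}_{\rm sem}$. So your remark that Lemma~\ref{lem:access-sound} is not needed is true of the bridge lemma but not of its premise. You should state $\mathsf{AccessSound}(M)$ and the property-footprint convention as explicit hypotheses rather than derive the lemma from $u\,\mathcal I_s\,v$ alone. Both hold where the lemma is actually used, inside Theorem~\ref{thm:por}, so nothing downstream is affected.
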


\emph{Proof.} This is immediate from Definition~\ref{def:independence}.

\begin{lemma}[Canonical representative]
\label{lem:canonical}
Every full equal-time execution is equivalent, by adjacent
$\mathcal I_s$-swaps, to one
whose next event belongs to the lexicographically rooted future-closed
component selected by \por{}.
\end{lemma}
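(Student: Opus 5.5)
The argument is the standard persistent-set one, specialized to a single logical time $\tau$. Fix a state $s$ with nonempty due set $E_\tau(s)$, and a full equal-time execution $\rho=e_1e_2\cdots e_n$ from $s$. Here "full" means it continues until $E_\tau$ is empty and the next transition is a TICK or termination. Let $C^\star$ be the component of $G_\tau^{+}$ containing the lexicographically least action $a^\star$, and let $\widehat C^\star=\bigcup_{e\in C^\star}\mathcal F_{s,\tau}(e)$ be its future closure. The proof is by induction on the position of the first event of $\rho$ that lies in $\widehat C^\star$; we show this event can be bubbled to the front by adjacent $\mathcal I$-swaps.

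\textbf{Step 1: such an event exists.} Every event of $C^\star$ is due at $\tau$ and, by the exact semantics, cannot be postponed by TICK. Hence a full equal-time execution must either fire some $e\in C^\star$ before time advances, or disable it. Disabling requires a conflicting write or guard, and that is itself a conflict with some event of $\widehat C^\star$. Any event that is not due at $s$ but occurs in $\rho$ was made ready within the tick. By Lemma~\ref{lem:enable-cover}, $\mathsf{StaticEnableOK}$ gives $\mathsf{EnableSound}$, so such an event lies in the future closure $\mathcal F_{s,\tau}(e')$ of the event $e'$ that enabled it. Consequently, every event of $\rho$ falls into one of two classes: it lies in $\widehat C^\star$, or it lies in the closure of a due event outside $C^\star$. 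Let $e_k$ be the first event of $\rho$ in $\widehat C^\star$. It must in fact be a due event of $C^\star$, because its causal and guard predecessors within the tick are themselves in the closure and would therefore occur earlier.

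\textbf{Step 2: swap $e_k$ forward.} Each $e_j$ with $j<k$ lies in the closure of some due $v\notin C^\star$. Since $C^\star$ is a connected component of the lifted conflict graph, $\neg(e\sim^+_{M,s}v)$ holds for every $e\in C^\star$. Lifted conflict quantifies over entire future closures, so this gives $\neg\mathrm{Conf}(e_k,e_j)$ for the base conflict. By Lemma~\ref{lem:access-sound}, together with $\mathsf{AccessSound}(M)$, we obtain $e_k\,\mathcal I\,e_j$ at the intermediate state. Lemma~\ref{lem:diamond} then lets us swap $e_{k-1}e_k$ into $e_ke_{k-1}$ without changing $\Xi$ or any property verdict. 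Iterating this $k-1$ times yields an equivalent execution whose first event is in $C^\star$, which is exactly what $\textsc{Next}_{\rm MI}$ explores.

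The endpoint and schedule option chosen for $e_k$ is carried along unchanged. This is legitimate because $\operatorname{Options}$ enumerates every variant for $e\in C^\star$, and schedules assigned to successors of $e_k$ do not interact with events outside the closure. For the same reason, events outside $C^\star$ stay due and commute past it.

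\textbf{Main obstacle.} The delicate point is that independence must hold at the \emph{intermediate} states $e_1\cdots e_{j-1}(s)$, not only at $s$. The conflict relation $\mathrm{Conf}_s$ is state-indexed through $\mathrm{Snap}_s$, and new events enabled mid-tick could introduce fresh conflicts. I would handle this by arguing that the lifted relation $\sim^+_{M,s}$, computed once at $s$ over full closures, over-approximates $\mathrm{Conf}$ at every state reachable within the same tick. Two facts support this. First, $\mathsf{EnableSound}$ guarantees that the closures contain every event that could become ready. Second, rollback snapshots are included in the conflict graph, so no pre-mutation read escapes. With this monotonicity claim in place, the single-state Lemma~\ref{lem:access-sound} applies at each swap, and the induction closes.
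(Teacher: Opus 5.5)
Your proposal is correct and is essentially the paper's argument: the paper's proof is a one-line version of your bubbling step, saying that the future closure pulls every conflict exposed later in the tick into $C^\star$, so every earlier event outside it commutes with the first $C^\star$ event and can be moved forward by Lemma~\ref{lem:diamond}. Your Step~1 and your intermediate-state (monotonicity) paragraph make explicit what the paper compresses into the phrase ``pulls every exposed conflict into $C^\star$.''
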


\emph{Proof.} The future closure pulls every exposed conflict into $C^\star$;
all outside events commute and can be moved by Lemma~\ref{lem:diamond}.

\begin{theorem}[Declared-footprint bounded verdict preservation]
\label{thm:por}
For declared event contracts and properties, if
$\mathsf{AccessSound}(M)$ and
$\mathsf{EnableSound}(\mathcal P,\mathcal B,M)$ hold, and
Stage~1 is used only when $\mathsf{LocalityOK}(\mathcal K;\Omega)$ holds,
\por{} preserves the verifier verdict under the priority
$\mathsf U\succ\mathsf I\succ\mathsf S$. When the preserved verdict is
$\mathsf S$, it also preserves the minimum and maximum terminal completion
times. Transparent projection requires Lemma~\ref{lem:projection}.
\end{theorem}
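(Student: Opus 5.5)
The plan is to split the claim by the two \por{} stages and, for each, show that the reduced exploration reaches a violating state iff exact search does, reaches a deadlock or horizon-exceeding trace iff exact search does, and realizes the same extremal completion times. By Lemma~\ref{lem:projection} we may first replace the six-event reference by the contracted model: labels and terminal-time extrema are unchanged. After that step only retained $R,D,P,O$ events appear, and these are the events the access map covers.

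\textbf{Stage~2 (persistent component).} Fix a logical time $\tau$ and a state $s$ with nonempty $E_\tau(s)$. Tick transitions are untouched, and both searches schedule successors through the same $\operatorname{Options}$. Exact and reduced search therefore differ only in the equal-time orderings explored within one tick block, so I would argue block by block and induct on $\tau$ up to $H$.

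Within a block, take any exact maximal equal-time sequence. I would apply Lemma~\ref{lem:canonical}, which needs two inputs. The first is $\mathsf{EnableSound}$: every event that can become ready before time advances lies in some $\mathcal F_{s,\tau}(e)$, so no hidden conflicting event appears later inside the block. The second is Lemma~\ref{lem:access-sound}, applied under $\mathsf{AccessSound}(M)$: absence of a detected conflict with $C^\star$ yields genuine $\mathcal I_s$. With these, the sequence is permuted by adjacent swaps into one whose first event lies in $C^\star$.

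By Lemma~\ref{lem:diamond}, each swap preserves the future-relevant key $\Xi$ at the end of the block. It also preserves whether a declared property is violated along the way, because properties read only $\Xi$-fields. Iterating this argument, with the selection rule re-applied at each intermediate state, gives a reduced sequence with the same end key and the same violation status.

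Completion status and terminal timestamps are part of $\Xi$, so they are preserved as well. Hence the exact verdict under the priority $\mathsf U\succ\mathsf I\succ\mathsf S$ is preserved, and when the verdict is $\mathsf S$ the set of realized $T_{\rm comp}$ values is preserved, which gives the min/max claim.

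For deadlock, or $\mathsf I$, I would check the other direction separately. An exact deadlocked trace is one with no enabled event and no Tick. Because due events cannot be postponed and events outside $C^\star$ ``stay due and commute'', the reduced search must fire them in later steps, so it reaches the same stuck key. Conversely, every reduced trace is an exact trace.

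\textbf{Stage~1 (component decomposition).} Under $\mathsf{LocalityOK}$ every safety, gate, freshness, request-ready and per-action terminal obligation reads only one component $K_i$. The static conflict graph has no edges across components, and $\mathsf{AccessSound}$ makes these components semantically independent. An exact interleaving of the whole plan therefore projects to interleavings of each $\mathcal P[K_i]$, and any tuple of component executions can be merged into an exact execution, with ticks shared.

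Safety violation is then an existential over components, which gives the $\mathsf U$ case. Completion is the conjunction of per-action terminals, with $T_{\rm comp}=\max_i T_i$. Extrema compose as $\max_i T_{i,\min}$ and $\max_i T_{i,\max}$, because delays are chosen independently in each component; this holds for both the maximum and the minimum, since independence lets the component minima be realized simultaneously.

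The global horizon is checked after composition, through the $T_{\max}>H$ clause of $C_\oplus$. This matches exact search, where any branch exceeding $H$ is incomplete. Within each component, the Stage~2 argument above applies.

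\textbf{Main obstacle.} The delicate step is the canonical-representative argument in the presence of same-tick enabling and scheduling choices. Swapping $u$ past $v$ could change when a successor becomes ready, and therefore change the $\sigma$ interval $[\max\{\tau,L\},U]$ that is assigned to it. My first attempt would show that $\sigma$ assignments depend only on $\theta$ of predecessors and on $\tau$, both of which are identical across equal-time swaps. The $\mathcal Z$/$\mathcal N$ closure then guarantees that any successor whose scheduling reads a conflicting field is already inside $C^\star$. If that fails, I would strengthen $\mathcal I_s$ so that it also compares the schedule maps inside $\Xi$, which Definition~\ref{def:independence} already lists.
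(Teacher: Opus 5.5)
Your proposal is correct and follows the same route as the paper's proof sketch. It uses the equal-time representative argument via Lemmas~\ref{lem:access-sound}--\ref{lem:canonical} with TICK taken only on an empty due set, the Stage~1 composition under $\mathsf{LocalityOK}$ with $C_\oplus$ and $T_{\rm comp}=\max_i T_{{\rm comp},i}$, and Lemma~\ref{lem:projection} for transparent projection; your explicit treatment of the incomplete/deadlock direction and your product-structure argument that component minima are realized simultaneously (so $\min\max=\max\min$) spell out steps the paper leaves implicit.
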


\emph{Proof sketch.} Lemmas~\ref{lem:access-sound}--\ref{lem:canonical}
retain one representative per equal-time class with identical enabledness,
terminal status, state, and labels; TICK occurs only for an empty due set.
Under $\mathsf{LocalityOK}$, $C_\oplus$ returns $\mathsf U$ iff a component
does; absent $\mathsf U$, it returns $\mathsf I$ iff a component does or
$T_{\max}>H$. Otherwise
$T_{\rm comp}(\pi)=\max_iT_{{\rm comp},i}(\pi_i)$ preserves the extrema.
Lemma~\ref{lem:projection} handles transparent projection.

\FloatBarrier
\section{Implementation}

\setcounter{dbltopnumber}{1}
\begin{figure*}[t]
  \centering
  \includegraphics[width=0.99\textwidth]{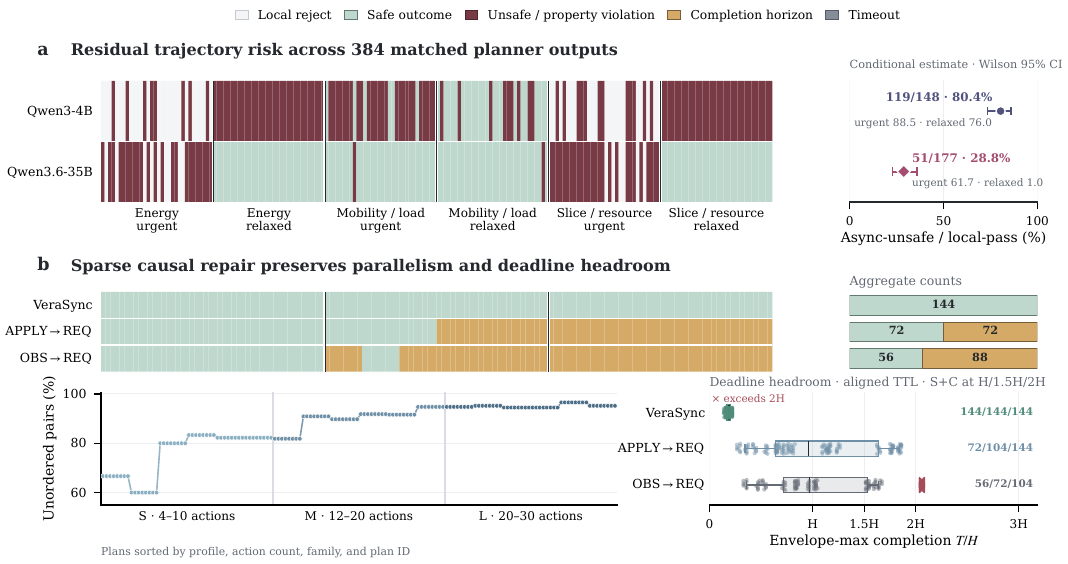}
  \caption{\textbf{Local validity leaves trajectory risk that sparse repair
  can close.} (a) Matched planner outputs with Wilson 95\% intervals.
  (b) Application-gate repair and release baselines on 144 confirmations;
  boxes show median/IQR, whiskers P5--P95, and crosses plans beyond $2H$.}
  \label{fig:discovery-repair}
\end{figure*}

\paragraph{Trusted computing base and certificate integrity}
The trusted computing base comprises the compiler and pinned registry,
verifier, certificate validator, and fail-closed executor. Together they bind
types, access maps, event contracts, properties, envelopes, identity fields,
deadlines, and component digests. Authenticated evidence producers and their
transport are part of this boundary. \repair{} only proposes barriers; the
verifier and validator must accept each proposal. Runtime evidence releases a
gate only when action, scope, version, and epoch match $\chi$.

Each of eight action types registers a bounded event DAG, evidence source,
RAN footprints, and same-tick enabling dependencies. Missing coverage selects
exact search. That engine retains nontransparent ACCEPT/COMPLETE nodes and
fail-closed gates; Lemma~\ref{lem:projection} permits the transparent
$R,D,P,O$ projection used by larger populations.

Mechanism replay sends APER controls through an xApp-like path to ns-O-RAN
commit \texttt{5f547ae8}~\cite{nsoran}. Handover APPLY is the serving-cell/RRC
transition~\cite{3gpp38331}; O1-bound cell sleep APPLY is a versioned PHY
commit~\cite{etsi2024o1}; quota APPLY is the versioned
\texttt{FdTbfqFfMacScheduler} update. Packet sinks expose the resulting
traffic effects, so APPLY is read from scheduler/PHY state rather than an
adapter-side mirror.

On OCUDU revision \texttt{9b0cfa6}, an xApp reaches a gNB
through SCTP/E2AP, O-RAN SC, and E2SM-RC Style~2/Action~6
~\cite{oran2023e2ap,oran2023e2smrc,ocuduproject}; instrumentation
records ACCEPT, slot-handler APPLY, executor COMPLETE, return ACK, and the
scheduler effect. Table~\ref{tab:endpoint-map} maps each evaluation layer to
the claim it supports.

\section{Evaluation}
\label{sec:evaluation}

\begin{figure*}[t]
  \centering
  \includegraphics[width=0.99\textwidth]{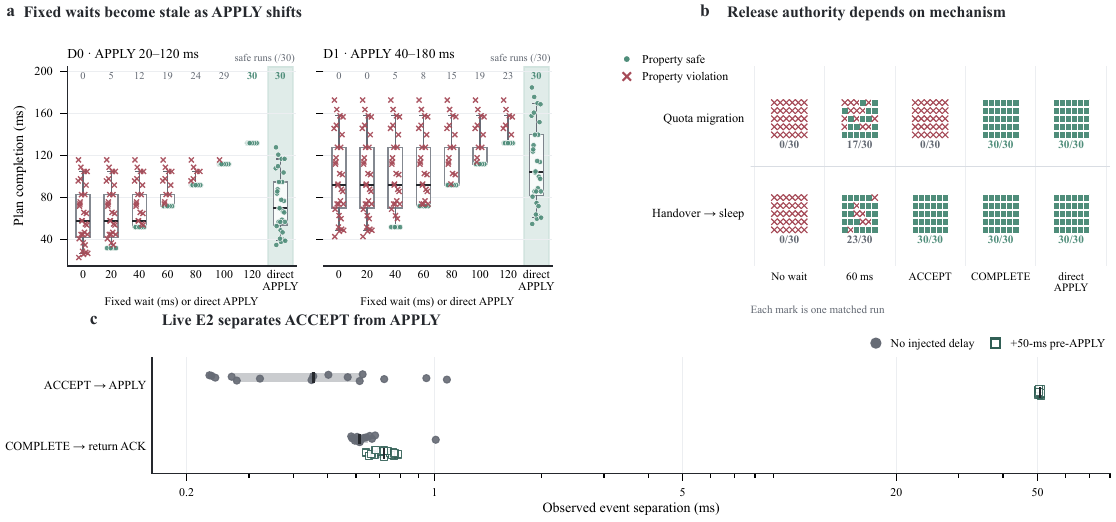}
  \caption{\textbf{Authoritative release adapts to actuation delay.}
  (a) Fixed-wait/direct-APPLY sweep (480 runs). (b) Mechanism-grounded release
  replay; each mark is one matched run (300 runs). (c) Live E2
  ACCEPT-to-APPLY separation (30 controls).}
  \label{fig:event}
\end{figure*}

\begin{figure*}[t]
  \centering
  \includegraphics[width=0.99\textwidth]{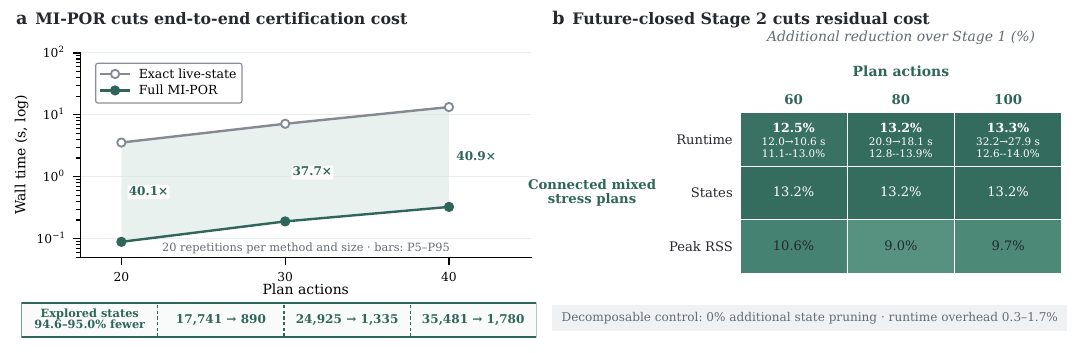}
  \caption{\textbf{\por{} reduces certification cost at both stages.}
  (a) Median wall time, P5--P95, and states (20 runs per size).
  (b) Residual Stage~2 savings; cells include absolute median seconds and
  P5--P95, with a decomposable control below.}
  \label{fig:mipor}
\end{figure*}

\subsection{Evidence strategy}

Before sampling, we fixed the generator, repair language, properties, delay
envelopes, and analysis protocol. The 144-plan confirmation spans six
families, 4/8/16 cells, and 4--30 actions, with one plan as the statistical
unit. The declared profile sets $H=64$ and evidence TTL $=64$. Joint
sensitivity evaluates $H\in\{64,96,128\}$ under both the declared TTL and a
horizon-aligned policy $\mathrm{TTL}=\max\{64,H\}$; Figure
~\ref{fig:discovery-repair}(b) reports the aligned policy. DELIVERY spans
0--1/1--2/1--3 ticks on E2/A1/O1; APPLY spans 1--2 except handover prepare at
1--3; OBSERVE spans 0--1 on E2 and 1--2 on A1/O1. These are symbolic stress
bounds. Figure~\ref{fig:event} provides millisecond-scale mechanism grounding.

\subsection{Residual risk after local validation}

The matched study assigns 192 base cases evenly to energy, mobility/load, and
slice/resource plans under urgent or relaxed timing. Horizon/deadline pairs
are 12/8 versus 24/20 ticks for energy and slice/resource, and 12/7 versus
20/17 for mobility. Both planners share state, goal, prompt, schema, compiler,
verifier, properties, and request seed; all calls compile.

A plan is local-pass when its schema, identifiers, action envelope, initial
state, request-time preconditions, and individual effects are valid under the
declared request order; this gate omits delay and interleaving exploration.
The deterministic planners are Qwen3.6-35B-A3B (3B active) and
Qwen3-4B-Instruct-2507~\cite{qwen36modelcard,qwen34bmodelcard}. Among
local-pass outputs, 51/177 35B plans are asynchronously unsafe (28.8\%, Wilson
95\% CI 22.6--35.9\%) versus 119/148 for 4B.

Every unsafe 35B plan starts and ends safely and is safe atomically; the 51
violations divide into 25 lifecycle-ordering and 26 cross-path resource cases.
They occur in 50/81 urgent versus 1/96 relaxed local-pass plans. Thus 28.8\%
is the conditional risk in this six-stratum benchmark, with timing pressure
providing the mechanism-level attribution.

\subsection{\repair{} closes plans with sparse ordering}

After fixing the generator and repair vocabulary, a pre-specified hash selects
eight of 16 eligible plans from each of 18 family-by-scale strata. Eligibility
requires compilation, satisfaction of the fixed local-pass criteria, and an
asynchronous counterexample before repair. Repair outcome and barrier count
are not selection inputs. The 144 confirmation plans have 4--30 actions; 96
combine mechanisms and 136 contain multiple unsafe cores.

The portable path lowers each synthesized APPLY-to-APPLY dependency to an
APPLY-to-REQUEST gate before pruning. Full verification certifies all 144
plans; none of its 1,400 barriers can be deleted, leaving 86.98\% of action
pairs unordered with median/P95 completion bounds of 17/17 ticks. When a
successor exposes a registered application hook, the direct APPLY-gate path
can retain the weaker symbolic order. Verified deletion then removes
140/1,400 barriers across 44 plans and raises the unordered fraction to
89.33\%, with median/P95 bounds of 12/14 ticks; every retained set is
1-minimal. Over 728 components of at most five actions, an oracle evaluates
42,888 acyclic barrier sets; the application-gate repair matches all three
lexicographic objectives on every plan within the declared APPLY-to-APPLY
vocabulary.

At $H=64$, APPLY- and OBSERVE-gated whole-plan serialization certify 72/144
and 56/144 plans. With $H=128$ and horizon-aligned TTL, they reach 144/144 and
104/144. Among plans certified at that setting, P95 completion bounds are 118
and 105 ticks, respectively (Fig.~\ref{fig:discovery-repair}(b)).

Boundary audits confirm fail-closed behavior: the API re-certifies 100/100
expressible controls and returns \textsc{unsupported} for 290/290 others;
ingestion releases 30/30 clean manifests once, rejects 240/240 static
mutations, and blocks dependent release for 90/90 runtime mismatches.

\subsection{Authoritative events realize the certificate}

Figure~\ref{fig:event}(b) uses 30 matched seeds per policy and mechanism.
No-wait release is always unsafe; 60 ms protects 17/30 quota and 23/30
handover--sleep runs. ACCEPT protects 0/30 quota but 30/30 handover runs,
where a fixed path delay places sleep APPLY 6.786 ms after the serving-cell
transition. This incidental protection does not make ACCEPT authoritative.
COMPLETE and direct APPLY protect every run, demonstrating that authority is
contract-specific rather than inferred from an ACK label.

In the 480-run delay-shift sweep, a 120-ms wait is safe in 30/30 nominal D0
runs but 23/30 stressed D1 runs; direct APPLY remains 30/30 and finishes 62 ms
sooner than that wait at the D0 median (Fig.~\ref{fig:event}(a)).

Across 30 OCUDU controls spanning 1/2/4 backlogged UEs and 0/50-ms pre-APPLY
delay, the xApp-to-DU/MAC path preserves
\[
R_a\prec_{\rho}A_a\prec_{\rho}P_a\prec_{\rho}C_a
\prec_{\rho}\mathrm{ACK}_a,
\qquad \theta_a(A_a)\leq\theta_a(P_a),
\]
ACCEPT and APPLY may share a timestamp but remain distinct events. Every
request has one matching APPLY, fresh version, and scheduler effect. Median
ACCEPT-to-APPLY separation is 0.455/50.897 ms; COMPLETE-to-return-ACK is
0.615/0.720 ms, placing both completion signals downstream of APPLY on this
contract. Intervals use the same host clock without overhead subtraction.

\subsection{\por{} preserves outcomes and reduces search}

\por{} matches exact verdicts, first violations, and completion extrema on all
130 property-stratified cases (78 unsafe, 26 safe, 26 repaired) and 1,120
directed or seeded-random cases, including 387 multi-component plans. The
six-event reference passes 56 nontransparent contracts and 256 valid
projections; hidden-dependency and access-class mutations trigger the expected
fallback or exact-versus-reduced witness.

On 20/30/40-action workloads, the full pipeline cuts states from
17,741/24,925/35,481 to 890/1,335/1,780 (94.6--95.0\%) and median time by
37.7--40.9$\times$; P95 latency is 90/191/327 ms. Stage~1 supplies the full
state reduction on these decomposable plans (Fig.~\ref{fig:mipor}(a)).

On connected 60/80/100-action workloads, Stage~2 reduces states by 13.2\%,
peak RSS by 9.0--10.6\%, and median runtime by 12.5--13.3\% beyond
decomposition. Median time falls from 12.0/20.9/32.2\,s to
10.6/18.1/27.9\,s; the decomposable control shows no added state pruning
(Fig.~\ref{fig:mipor}(b)).

\subsection{Deployment profiles}

\system{} certifies against registered event contracts, properties, and delay
envelopes. Missing access or same-tick coverage selects exact search; missing
or mismatched evidence closes the gate. A versioned mutation lease or endpoint
precondition protects the resource footprint, and each release rechecks
resource and registry versions, endpoint epoch, lease, and deadline. Any
mismatch requires re-certification. The 20--40-action profiles model online
plans; connected 60--100-action profiles model policy/configuration workflows.
New actions register event contracts plus access and enablement summaries.
Bounded completion assumes registered endpoints and authenticated producers
honor the registered request and lifecycle delay bounds in $\mathcal D$ and
$\{\delta_a\}_{a\in A}$; otherwise the gate stays closed until manifest expiry
at $t_d$.

\section{Related Work}

\textbf{RAN control and conflict handling.}
O-RAN distributes control across E2, A1, and O1
~\cite{polese2023understanding,oran2023e2ap,etsi2024a1,etsi2024o1}.
Existing systems characterize, mitigate, or block xApp conflicts
~\cite{prever2025pacifista,giannopoulos2025comix,adamczyk2023conflict,
wadud2024qacm,adamczyk2026accord,kwon2026orca,sultana2025conflict,
gelban2026outoforder,jawad2026copilot}.
\system{} certifies an already selected plan over bounded cross-endpoint
trajectories.

\textbf{Consistent updates and synchronization repair.}
Consistent updates schedule SDN changes, while NICE and SDNRacer expose
interleavings~\cite{reitblatt2011consistent,reitblatt2012abstractions,
jin2014dionysus,canini2012nice,elhassany2016sdnracer}. Synchronization synthesis
inserts dependencies over program or network transitions
~\cite{vechev2010synchronization,mcclurg2017syncnetwork}. These methods do not
generally bind a synthesized order to authoritative external lifecycle
evidence. \repair{} adds version- and scope-checked REQUEST gates or
application hooks, then re-certifies safety and bounded completion.

\textbf{Formal verification and runtime evidence.}
Formal O-RAN models verify energy and availability, while agentic model
checking verifies generated programs~\cite{metere2025oran,sun2026agentic}.
\system{} instead models action lifecycles, explores bounded trajectories
before actuation, and records which versioned event may release each gate.

\textbf{Partial-order reduction.}
\por{} builds on distributed event ordering~\cite{lamport1978time},
representative executions~\cite{peled1993representatives}, and
POR/DPOR/source sets~\cite{clarke,godefroid,musuvathi2008chess,
flanagan2005dpor,abdulla2014optimal,abdulla2017sourcesets}. It specializes
independence with lifecycle, resource, evidence, completion, repair, and
rollback footprints, including same-tick successors exposed by gates.

\FloatBarrier
\section{Conclusion}

\system{} certifies asynchronous RAN plans under per-action event contracts.
\repair{} compiles supported counterexamples into version- and scope-bound
gates and re-certifies the plan. All 144 confirmations certify while leaving
86.98\% of action pairs unordered with portable REQUEST gates and 89.33\%
with application hooks. Live E2 measurements ground release authority;
\por{} removes 94.6--95.0\% of states.

\bibliographystyle{IEEEtran}
\bibliography{references}

@IEEEtranBSTCTL{BSTcontrol,
  CTLuse_forced_etal       = {yes},
  CTLmax_names_forced_etal = {5},
  CTLnames_show_etal       = {5}
}

@inproceedings{nsoran,
  author       = {Andrea Lacava and Matteo Bordin and Michele Polese
                  and Rajarajan Sivaraj and Tommaso Zugno and Francesca Cuomo
                  and Tommaso Melodia},
  title        = {{ns-O-RAN}: Simulating {O-RAN} {5G} Systems in {ns-3}},
  booktitle    = {Proc. Workshop on ns-3},
  year         = {2023},
  pages        = {35--44},
  doi          = {10.1145/3592149.3592161}
}

@book{clarke,
  title        = {Model Checking},
  author       = {Clarke, Edmund M. and Grumberg, Orna and Peled, Doron A.},
  publisher    = {MIT Press},
  year         = {1999}
}

@book{godefroid,
  title        = {Partial-Order Methods for the Verification of Concurrent Systems},
  author       = {Godefroid, Patrice},
  publisher    = {Springer},
  year         = {1996}
}

@inproceedings{metere2025oran,
  title        = {Towards Achieving Energy Efficiency and Service Availability in {6G} {O-RAN} via Formal Verification},
  author       = {Metere, Roberto and Ye, Kangfeng and Gu, Yue and Zhang, Zhi and Alrajeh, Dalal and Sevegnani, Michele and Yadav, Poonam},
  booktitle    = {Proc. 12th Int. Symp. From Data to Models and Back
                  (DataMod 2024)},
  series       = {Lecture Notes in Computer Science},
  volume       = {15556},
  pages        = {137--157},
  year         = {2025},
  doi          = {10.1007/978-3-031-87908-1_9}
}

@misc{sun2026agentic,
  title        = {Agentic Model Checking},
  author       = {Sun, Youcheng and Liu, Jiawen and Kroening, Daniel and Xue, Jason},
  year         = {2026},
  eprint       = {2605.21434},
  archiveprefix = {arXiv},
  primaryclass = {cs.SE},
  note         = {arXiv:2605.21434}
}

@article{adamczyk2023conflict,
  title        = {Conflict Mitigation Framework and Conflict Detection in {O-RAN} {Near-RT} {RIC}},
  author       = {Adamczyk, Cezary and Kliks, Adrian},
  year         = {2023},
  journal      = {IEEE Communications Magazine},
  volume       = {61},
  number       = {12},
  pages        = {199--205},
  doi          = {10.1109/MCOM.018.2200752}
}

@inproceedings{sultana2025conflict,
  title        = {Experimental Evaluation of {xApp} Conflict Mitigation Framework in {O-RAN}: Insights from Testbed Deployment in {OTIC}},
  author       = {Sultana, Abida and Adamczyk, Cezary and Roy Chowdhury, Mayukh and Kliks, Adrian and Da Silva, Aloizio},
  booktitle    = {Proc. IEEE INFOCOM Workshops},
  year         = {2025},
  pages        = {1--6},
  doi          = {10.1109/INFOCOMWKSHPS65812.2025.11152996}
}

@article{giannopoulos2025comix,
  author       = {Anastasios E. Giannopoulos and Sotirios T. Spantideas
                  and George Levis and Alexandros S. Kalafatelis
                  and Panagiotis Trakadas},
  title        = {{COMIX}: Generalized Conflict Management in {O-RAN}
                  {xApps}---Architecture, Workflow, and a Power Control Case},
  journal      = {IEEE Access},
  year         = {2025},
  volume       = {13},
  pages        = {116684--116700},
  doi          = {10.1109/ACCESS.2025.3585774}
}

@article{prever2025pacifista,
  author       = {Pietro Brach del Prever and Salvatore D'Oro
                  and Leonardo Bonati and Michele Polese and Maria Tsampazi
                  and Heiko Lehmann and Tommaso Melodia},
  title        = {{PACIFISTA}: Conflict Evaluation and Management in Open {RAN}},
  journal      = {IEEE Transactions on Mobile Computing},
  year         = {2025},
  volume       = {24},
  number       = {10},
  pages        = {10590--10605},
  doi          = {10.1109/TMC.2025.3570632}
}

@inproceedings{kwon2026orca,
  author       = {Dae Cheol Kwon and Xinyu Zhang},
  title        = {Open {RAN} Conflict Agents: Detecting and Mitigating {xApp}
                  Conflicts with Generative Agents},
  booktitle    = {Proc. IEEE INFOCOM},
  year         = {2026},
  pages        = {1--10},
  doi          = {10.1109/INFOCOM59046.2026.11571236}
}

@inproceedings{adamczyk2026accord,
  author       = {Cezary Adamczyk and Adrian Kliks},
  title        = {{ACCoRD}: Actor-Critic Conflict Resolution with Deep Learning
                  for {O-RAN} {xApps}},
  booktitle    = {Proc. IEEE INFOCOM Workshops},
  year         = {2026},
  pages        = {1--6},
  doi          = {10.1109/INFOCOM59046.2026.11571309}
}

@article{polese2023understanding,
  author       = {Michele Polese and Leonardo Bonati and Salvatore D'Oro
                  and Stefano Basagni and Tommaso Melodia},
  title        = {Understanding {O-RAN}: Architecture, Interfaces, Algorithms,
                  Security, and Research Challenges},
  journal      = {IEEE Communications Surveys \& Tutorials},
  year         = {2023},
  volume       = {25},
  number       = {2},
  pages        = {1376--1411},
  doi          = {10.1109/COMST.2023.3239220}
}

@article{elkael2026agentran,
  author       = {Maxime Elkael and Salvatore D'Oro and Leonardo Bonati
                  and Michele Polese and Yunseong Lee and Koichiro Furueda
                  and Tommaso Melodia},
  title        = {{AgentRAN}: An Agentic {AI} Architecture for Autonomous
                  Control of Open {6G} Networks},
  journal      = {IEEE Communications Magazine},
  year         = {2026},
  pages        = {1--7},
  note         = {Early access},
  doi          = {10.1109/MCOM.001.2500563}
}

@article{wadud2024qacm,
  author       = {Abdul Wadud and Fatemeh Golpayegani and Nima Afraz},
  title        = {{QACM}: {QoS}-Aware {xApp} Conflict Mitigation in Open {RAN}},
  journal      = {IEEE Transactions on Green Communications and Networking},
  year         = {2024},
  volume       = {8},
  number       = {3},
  pages        = {978--993},
  doi          = {10.1109/TGCN.2024.3431945}
}

@article{gelban2026outoforder,
  author       = {Hams Gelban and Rouaa Naim and Ahmed Badawy},
  title        = {Detecting Out-of-Order Control Messages in {O-RAN}:
                  Dataset, Benchmarks, and Early-Warning Models},
  journal      = {IEEE Open Journal of the Communications Society},
  year         = {2026},
  volume       = {7},
  pages        = {1941--1957},
  doi          = {10.1109/OJCOMS.2026.3666940}
}

@article{jawad2026copilot,
  author       = {Md Asef Jawad and Mohammad Mehedi Hasan Munna
                  and Acramul Haque Kabir and Nahid Hossain Antu
                  and Reefka Fabliha Tulona},
  title        = {A Runtime Safety Copilot for {AI}-Native {O-RAN}:
                  Predictive Verification and Fail-Safe Enforcement in
                  {Near-RT} {RIC} Control Loops},
  journal      = {IEEE Access},
  year         = {2026},
  volume       = {14},
  pages        = {63106--63120},
  doi          = {10.1109/ACCESS.2026.3686132}
}

@inproceedings{reitblatt2012abstractions,
  author       = {Mark Reitblatt and Nate Foster and Jennifer Rexford
                  and Cole Schlesinger and David Walker},
  title        = {Abstractions for Network Update},
  booktitle    = {Proc. ACM SIGCOMM},
  year         = {2012},
  pages        = {323--334},
  doi          = {10.1145/2342356.2342427}
}

@inproceedings{vechev2010synchronization,
  author       = {Martin Vechev and Eran Yahav and Greta Yorsh},
  title        = {Abstraction-Guided Synthesis of Synchronization},
  booktitle    = {Proc. 37th ACM SIGPLAN-SIGACT POPL},
  year         = {2010},
  pages        = {327--338},
  doi          = {10.1145/1706299.1706338}
}

@inproceedings{flanagan2005dpor,
  author       = {Cormac Flanagan and Patrice Godefroid},
  title        = {Dynamic Partial-Order Reduction for Model Checking Software},
  booktitle    = {Proc. 32nd ACM SIGPLAN-SIGACT POPL},
  year         = {2005},
  pages        = {110--121},
  doi          = {10.1145/1040305.1040315}
}

@inproceedings{abdulla2014optimal,
  author       = {Parosh Aziz Abdulla and Stavros Aronis and Bengt Jonsson
                  and Konstantinos Sagonas},
  title        = {Optimal Dynamic Partial Order Reduction},
  booktitle    = {Proc. 41st ACM SIGPLAN-SIGACT POPL},
  year         = {2014},
  pages        = {373--384},
  doi          = {10.1145/2535838.2535845}
}

@article{lamport1978time,
  author       = {Leslie Lamport},
  title        = {Time, Clocks, and the Ordering of Events in a Distributed
                  System},
  journal      = {Communications of the ACM},
  year         = {1978},
  volume       = {21},
  number       = {7},
  pages        = {558--565},
  doi          = {10.1145/359545.359563}
}

@inproceedings{peled1993representatives,
  author       = {Doron Peled},
  title        = {All from One, One for All: On Model Checking Using
                  Representatives},
  booktitle    = {Proc. CAV},
  volume       = {697},
  pages        = {409--423},
  year         = {1993},
  doi          = {10.1007/3-540-56922-7_34}
}

@inproceedings{musuvathi2008chess,
  author       = {Madanlal Musuvathi and Shaz Qadeer and Thomas Ball
                  and Gerard Basler and Piramanayagam Arumuga Nainar
                  and Iulian Neamtiu},
  title        = {Finding and Reproducing Heisenbugs in Concurrent Programs},
  booktitle    = {Proc. 8th USENIX OSDI},
  pages        = {267--280},
  year         = {2008}
}

@article{abdulla2017sourcesets,
  author       = {Parosh Aziz Abdulla and Stavros Aronis and Bengt Jonsson
                  and Konstantinos Sagonas},
  title        = {Source Sets: A Foundation for Optimal Dynamic Partial Order
                  Reduction},
  journal      = {J. ACM},
  year         = {2017},
  volume       = {64},
  number       = {4},
  pages        = {25:1--25:49},
  doi          = {10.1145/3073408}
}

@inproceedings{reitblatt2011consistent,
  author       = {Mark Reitblatt and Nate Foster and Jennifer Rexford
                  and David Walker},
  title        = {Consistent Updates for Software-Defined Networks:
                  Change You Can Believe In!},
  booktitle    = {Proc. 10th ACM HotNets},
  year         = {2011},
  doi          = {10.1145/2070562.2070569}
}

@inproceedings{canini2012nice,
  author       = {Marco Canini and Daniele Venzano
                  and Peter Pere{\v s}{\'\i}ni and Dejan Kosti{\'c}
                  and Jennifer Rexford},
  title        = {A {NICE} Way to Test {OpenFlow} Applications},
  booktitle    = {Proc. 9th USENIX NSDI},
  pages        = {127--140},
  year         = {2012}
}

@inproceedings{jin2014dionysus,
  author       = {Xin Jin and Hongqiang Harry Liu and Rohan Gandhi
                  and Srikanth Kandula and Ratul Mahajan and Jennifer Rexford
                  and Roger Wattenhofer and Ming Zhang},
  title        = {Dionysus: Dynamic Scheduling of Network Updates},
  booktitle    = {Proc. ACM SIGCOMM},
  pages        = {539--550},
  year         = {2014},
  doi          = {10.1145/2619239.2626307}
}

@inproceedings{elhassany2016sdnracer,
  author       = {Ahmed El-Hassany and J{\'e}r{\'e}mie Miserez and Pavol Bielik
                  and Laurent Vanbever and Martin Vechev},
  title        = {{SDNRacer}: Concurrency Analysis for Software-Defined
                  Networks},
  booktitle    = {Proc. 37th ACM SIGPLAN PLDI},
  pages        = {402--415},
  year         = {2016},
  doi          = {10.1145/2908080.2908124}
}

@inproceedings{mcclurg2017syncnetwork,
  author       = {Jedidiah McClurg and Hossein Hojjat and Pavol {\v C}ern{\'y}},
  title        = {Synchronization Synthesis for Network Programs},
  booktitle    = {Proc. CAV},
  volume       = {10427},
  pages        = {301--321},
  year         = {2017},
  doi          = {10.1007/978-3-319-63390-9_16}
}

@article{bonati2021intelligence,
  author       = {Leonardo Bonati and Salvatore D'Oro and Michele Polese
                  and Stefano Basagni and Tommaso Melodia},
  title        = {Intelligence and Learning in {O-RAN} for Data-Driven {NextG}
                  Cellular Networks},
  journal      = {IEEE Communications Magazine},
  year         = {2021},
  volume       = {59},
  number       = {10},
  pages        = {21--27},
  doi          = {10.1109/MCOM.101.2001120}
}

@article{garciasaavedra2021oran,
  author       = {Andres Garcia-Saavedra and Xavier Costa-P{\'e}rez},
  title        = {{O-RAN}: Disrupting the Virtualized {RAN} Ecosystem},
  journal      = {IEEE Communications Standards Magazine},
  year         = {2021},
  volume       = {5},
  number       = {4},
  pages        = {96--103},
  doi          = {10.1109/MCOMSTD.101.2000014}
}

@article{hoffmann2024xapps,
  author       = {Marcin Hoffmann and Salim Janji and Adam Samorzewski
                  and Lukasz Kulacz and Cezary Adamczyk and Marcin Dryjanski
                  and Pawel Kryszkiewicz and Adrian Kliks and Hanna Bogucka},
  title        = {Open {RAN} {xApps} Design and Evaluation: Lessons Learnt and
                  Identified Challenges},
  journal      = {IEEE Journal on Selected Areas in Communications},
  year         = {2024},
  volume       = {42},
  number       = {2},
  pages        = {473--486},
  doi          = {10.1109/JSAC.2023.3336190}
}

@article{oh2011dynamic,
  author       = {Eunsung Oh and Bhaskar Krishnamachari and Xin Liu
                  and Zhisheng Niu},
  title        = {Toward Dynamic Energy-Efficient Operation of Cellular
                  Network Infrastructure},
  journal      = {IEEE Communications Magazine},
  year         = {2011},
  volume       = {49},
  number       = {6},
  pages        = {56--61},
  doi          = {10.1109/MCOM.2011.5783985}
}

@article{li2011energyefficient,
  author       = {Geoffrey Ye Li and Zhikun Xu and Cong Xiong
                  and Chenyang Yang and Shunqing Zhang and Yan Chen
                  and Shugong Xu},
  title        = {Energy-Efficient Wireless Communications: Tutorial, Survey,
                  and Open Issues},
  journal      = {IEEE Wireless Communications},
  year         = {2011},
  volume       = {18},
  number       = {6},
  pages        = {28--35},
  doi          = {10.1109/MWC.2011.6108331}
}

@article{caballero2017multitenant,
  author       = {Pablo Caballero and Albert Banchs and Gustavo de Veciana
                  and Xavier Costa-P{\'e}rez},
  title        = {Multi-Tenant Radio Access Network Slicing: Statistical
                  Multiplexing of Spatial Loads},
  journal      = {IEEE/ACM Transactions on Networking},
  year         = {2017},
  volume       = {25},
  number       = {5},
  pages        = {3044--3058},
  doi          = {10.1109/TNET.2017.2720668}
}

@techreport{oran2023e2ap,
  author       = {{O-RAN Alliance}},
  title        = {{E2} Interface: Application Protocol ({E2AP})},
  institution  = {O-RAN Alliance},
  number       = {O-RAN.WG3.E2AP-R003-v03.00},
  year         = {2023}
}

@techreport{oran2023e2smrc,
  author       = {{O-RAN Alliance}},
  title        = {{E2} Service Model ({E2SM}) {RAN} Control},
  institution  = {O-RAN Alliance},
  number       = {O-RAN.WG3.E2SM-RC-R003-v03.00},
  year         = {2023}
}

@techreport{etsi2024a1,
  author       = {{ETSI}},
  title        = {{A1} Interface: General Aspects and Principles},
  institution  = {European Telecommunications Standards Institute},
  number       = {ETSI TS 103 983 V3.1.0},
  year         = {2024},
  month        = jan,
  note         = {O-RAN.WG2.A1GAP-R003-v03.01}
}

@techreport{etsi2024o1,
  author       = {{ETSI}},
  title        = {{O-RAN} Operations and Maintenance Interface Specification},
  institution  = {European Telecommunications Standards Institute},
  number       = {ETSI TS 104 043 V11.0.0},
  year         = {2024},
  month        = jun,
  note         = {O-RAN.WG10.O1-Interface-R003-v11.00}
}

@techreport{3gpp38331,
  author       = {{ETSI}},
  title        = {{5G}; {NR}; Radio Resource Control ({RRC}); Protocol Specification},
  institution  = {European Telecommunications Standards Institute},
  number       = {ETSI TS 138 331 V18.6.0},
  year         = {2025},
  month        = jul,
  note         = {3GPP TS 38.331 Release 18}
}

@misc{qwen36modelcard,
  author       = {{Qwen Team}},
  title        = {{Qwen3.6-35B-A3B} Model Card},
  year         = {2026},
  howpublished = {\url{https://huggingface.co/Qwen/Qwen3.6-35B-A3B}},
  note         = {Experimental revision 1a5ae24e867f8d82388070d3f61590158a01d15c;
                  accessed July 2026}
}

@misc{qwen34bmodelcard,
  author       = {{Qwen Team}},
  title        = {{Qwen3-4B-Instruct-2507} Model Card},
  year         = {2025},
  howpublished = {\url{https://huggingface.co/Qwen/Qwen3-4B-Instruct-2507}},
  note         = {Revision cdbee75f17c01a7cc42f958dc650907174af0554;
                  accessed July 2026}
}

@misc{ocuduproject,
  author       = {{OCUDU Project}},
  title        = {{OCUDU}: Open-Source {5G} and Beyond {CU/DU}},
  year         = {2026},
  howpublished = {\url{https://gitlab.com/ocudu/ocudu}},
  note         = {Revision 9b0cfa600d9d693bf56277e4575dc8c4b8b729bb;
                  accessed July 2026}
}

\end{document}